\definecolor{darkgreen}{rgb}{0.0,0,0.9}
\newtcolorbox{wbox}
{
	colback  = white,
}
\newcommand*{\suppress}[1]{}
\newcommand*{\cR}{\mathcal{R}}
\newcommand*{\cQ}{\mathcal{Q}}
\def\thm@space@setup{%
	\thm@preskip= 10pt
	\thm@postskip=\thm@preskip % or whatever, if you don't want them to be equal
}
\renewcommand{\paragraph}{%
	\@startsection{paragraph}{4}%
	{\z@}{5pt}{-1em}%
	{\normalfont\normalsize\bfseries}%
}
\newtheorem{theorem}{Theorem}
\newtheorem{lemma}{Lemma}
\newtheorem{corollary}{Corollary}
\theoremstyle{definition}
\newtheorem{definition}{Definition}
\newtheorem{remark}{Remark}
\newtheorem{alg}{Algorithm}
\newtheorem{example}{Example}
\newenvironment{fminipage}%
{\begin{Sbox}\begin{minipage}}%
		{\end{minipage}\end{Sbox}\fbox{\TheSbox}}
\newcommand{\spread}{\mbox{\rm spread}}
\newcommand{\lleft}{\mbox{\rm left}}
\newcommand{\rright}{\mbox{\rm right}}
\newcommand{\Init}{\mbox{\textsc{Bin}}}
\newcommand{\Act}{\mbox{\textsc{Active}}}
\newcommand{\Full}{\mbox{\textsc{Fully-Repaired}}}
\newcommand{\Fro}{\mbox{\textsc{Frozen}}}
\newcommand{\Ssuba}{\mbox{\textsc{Min-Sub1}}}
\newcommand{\Ssubb}{\mbox{\textsc{Min-Sub2}}}
\newcommand{\Ssubc}{\mbox{\textsc{Min-Sub3}}}
\newcommand{\Xsuba}{\mbox{\textsc{Max-Sub1}}}
\newcommand{\Xsubb}{\mbox{\textsc{Max-Sub2}}}
\newcommand{\Xsubc}{\mbox{\textsc{Max-Sub3}}}
\title{Fair Core Imputations for the Assignment Game: \\ 
New Solution Concepts and Efficient Algorithms}
\author[1]{Vijay V.~Vazirani\footnote{Supported in part by NSF grant CCF-2230414.}}
\affil[1]{University of California, Irvine}
\date{}
\begin{document}
	\maketitle

	\begin{abstract}
	
The assignment game is a classical model for profit-sharing and a cornerstone of cooperative game theory. While an imputation in its {\bf core} guarantees fairness among coalitions, it provides no fairness guarantee at the level of individual agents: single agents or one-sided coalitions have zero standalone value and may receive arbitrarily small payoffs. Motivated by the growing focus on individual-level fairness, we ask: {\bf Can one select a core imputation that is also fair to individuals}?

We introduce {\bf three individual-fairness-driven solution concepts}, each promoting equity in a different way. The {\bf leximin} and {\bf leximax} core imputations extend max–min and min–max fairness to uplift the least advantaged and constrain the most advantaged agents, respectively. The {\bf min-spread} core imputation minimizes the gap between the largest and smallest positive payoffs, promoting equitable profit distribution.

For all three solution concepts, we develop {\bf combinatorial, strongly polynomial algorithms}. The leximin and leximax algorithms are based on a {\bf novel adaptation of the primal–dual paradigm}, while the min-spread algorithm combines partial executions of the first two. We expect our work to revive innovation on the potent primal-dual paradigm as well as promote further work on the algorithmic study of fairness and stability.

	\end{abstract}

\begin{comment}

	The {\bf assignment game} is a classic setting for studying the {\bf core}, a central solution concept in cooperative game theory that ensures stability in profit-sharing. Although an imputation in the core of this game ensures fairness at the level of coalitions, we observe that it offers {\bf no fairness guarantee at the level of individual agents}: single agents or groups on one side of the bipartition have zero standalone value and can thus receive {\bf arbitrarily small} profit shares under a core imputation. Considering the renewed attention to \emph{individual-level fairness}, we ask: {\em Can we select a core imputation that is individually "fair"}?

We propose {\bf three individual-fairness-driven solution concepts}, each promoting equity in a different way. The {\bf leximin} and {\bf leximax} solutions generalize {\bf max-min} and {\bf min-max} fairness, aiming to help the worst-off and curb excesses of the best-off, respectively. The third, the {\bf min-spread} core imputation, minimizes the gap between the largest and smallest non-zero payoffs, promoting equitable profit distribution. 

We give {\bf combinatorial, strongly polynomial algorithms} for all three concepts. The leximin and leximax solutions use a novel adaptation of the {\bf primal-dual paradigm}. The algorithm for the min-spread solution is arguably even more innovative: it is constructed by carefully orchestrating partial executions of the first two algorithms. Our work opens new directions for both algorithm design and fairness. 
\end{comment}

%%%%%%%%%%%%%%%%%%%%

\bigskip
\bigskip
\bigskip
\bigskip
\bigskip
\bigskip
\bigskip
\bigskip
\bigskip
\bigskip
\bigskip
\bigskip
\bigskip
\bigskip
\bigskip

\pagebreak

\clearpage
\pagenumbering{arabic}

\section{Introduction}
\label{sec.intro}

{\bf Fairness and equitable division} have long been central themes in economics, from early social choice theory \cite{Arrow2012social} to modern challenges in Internet resource allocation \cite{Moulin-Internet}. The study of fairness has come to a basic truth: {\em no single definition can universally capture fairness}, e.g., \cite{Fairness} lists 21 definitions! Within this vast literature, we are concerned with fairness at the level of {\em coalitions} and {\em individual fairness}. For the former, the {\em core} is the gold standard solution concept: it ensures that no coalition has an incentive to secede by guaranteeing each coalition at least as much profit as they can achieve on their own. Hence a core imputation guarantees stability by allocating a fair profit to each of exponentially many coalitions—a stringent requirement indeed. 

Recent work across game theory, networking and resource allocation has renewed attention on \emph{individual-level fairness} as a fundamental consideration, e.g., the wide spread use of max-min and min-max fairness \cite{max-min1, max-min2, max-min3, max-min4}, as well as  their extensions to \emph{leximin} and {\em leximax} allocations in modern applications such as fair rent division \cite{Procaccia-Rent} and citizens' assemblies \cite{Nature-2021}.

Our work bridges these two traditions in the context of the assignment game, which occupies a central place within cooperative game theory; the classic work of Shapley and Shubik \cite{Shapley1971assignment} gave a complete characterization of its core, see Section \ref{sec.SS-assn-game}. In this game, a coalition consisting of a single agent or a group from one side of the bipartition has zero standalone value. Consequently, under an arbitrary core imputation, individual agents are allowed to receive arbitrary profit shares, leading to inequity. To address this deficiency, we seek a core imputation that satisfies two key criteria:

\begin{enumerate}
\item The allocation of profit shares is {\bf individually fair}.
\item It is computable in {\bf polynomial—or preferably, strongly polynomial—time} via a {\bf combinatorial algorithm}\footnote{That is, the algorithm does not invoke an LP-solver, e.g., see \cite{Sch-book, LP.book}.}
\end{enumerate}

%Such algorithms provide profound insights into the structure of the problem and the solution, and  have historically been of much consequence in applications. The resulting theory of combinatorial optimization is one of the cornerstones\footnote{Combinatorial optimization has also shaped other areas such as approximation algorithm, see Section \ref{sec.framework}.} of the modern theory of algorithms, e.g., see \cite{Sch-book, LP.book}.

We propose {\bf three solution concepts}. All three deal with profit shares of {\em essential agents}, i.e., agents which are matched in every maximum weight matching, see Section \ref{sec.Complementarity}; these are the only agents which get positive profits in core imputations.  

\begin{enumerate}
	\item the {\bf leximin core imputation}, which extends max-min fairness by maximizing the smallest individual share and, subject to that, the next smallest, and so on;
	\item the {\bf leximax core imputation}, which symmetrically extends min-max fairness by minimizing the largest share, then the second largest, and so on;
	\item the {\bf min-spread core imputation}, which minimizes the difference between the largest and smallest  profit shares of essential agents.
\end{enumerate}

 It is easy to see that none of these solution concepts dominates the others, e.g., see the example given in Section \ref{sec.properties}. Our solution concepts seek to reconcile the tension between coalitions demanding profit which is commensurate with their inherent worth and the societal norm of ``equality'' as a desirable outcome. Each promotes individual fairness in a distinct way: leximin uplifts the least advantaged, leximax reins in the most advantaged and min-spread makes individual profit shares as equal as possible.  In Section \ref{sec.properties} we show that whereas leximin and leximax core imputations are unique, min-spread is not. In Section \ref{sec.comparison} we show that the prominent solution concept of {\bf nucleolus does not promote individual fairness} in any of these three ways, even for an infinite family of games, Theorem \ref{thm.infinite}.  

We design {\bf combinatorial, strongly polynomial algorithms}, having a running time of $O(m n^3)$, for all three proposed solution concepts. For the leximin and leximax core imputations, our algorithms are based on a novel adaptation of the classical primal-dual paradigm; see Section~\ref{sec.framework}. The algorithm for the min-spread solution is arguably even more innovative. It is made possible by the combinatorial nature of the previous two algorithms and is obtained by carefully orchestrating their partial executions, see Section~\ref{sec.min-spread}.

The assignment and stable matching games share a key structural property: in both, the set of solutions forms a {\bf distributive lattice}, with the top and bottom elements favoring opposite sides of the bipartition (see details in Section~\ref{sec.related}). These extreme solutions are efficiently computable in both settings. However, while the stable matching lattice is {\bf finite}—enabling the design of algorithms for finding more equitable outcomes—the lattice of core imputations in the assignment game is {\bf infinite}. This fundamental difference precludes the use of Birkhoff’s Representation Theorem and necessitates a {\bf fundamentally new approach} to identifying individually fair core imputations.

%Before proceeding, it is critical to first identify the agents who receive {\bf positive} payoffs in core imputations. Using {\bf complementary slackness conditions}, we obtain a precise characterization (see Section~\ref{sec.Complementarity}); this plays a central role in our algorithms.

%A classic result of Shapley and Shubik \cite{Shapley1971assignment} shows that the set of core imputations coincides with the set of optimal solutions to the dual of the linear programming relaxation of the maximum-weight matching problem on the underlying bipartite graph. The game's worth corresponds to an optimal solution of the primal LP, and the key to connecting these two lies in 

We anticipate that the novel way of running the primal-dual method proposed here will not only be exploited for additional applications, but it will also revive innovation on this highly potent method\footnote{This  is the reason we preferred submitting this paper to a theory, rather than an AGT, conference.}. Additionally, we expect our work to promote further work on the algorithmic study of fairness and stability, as was done in \cite{Leximin-max}.

%%%%%%%%%%%%%%%%%%%%%%%%%%%%%%%%%%%%%%%%%%%

 %%%%%%%%%%%%%%%%%%%%%%%%%%%%%%%%%%%%%%%%%%%

\subsection{How was the Primal-Dual Framework Adapted to our Problem?}
\label{sec.framework}
 
 The primal-dual method has three distinguishing features:
 
 \begin{enumerate}
 	\item  It provides a {\bf flexible, high level framework} whose details can be tailored to the specific problem at hand. 
 	\item It achieves {\bf global optimality or near-optimality} through a sequence of {\bf local improvements}.
 	\item It leads to efficient, {\bf combinatorial} algorithms. 
 \end{enumerate} 
 
 The method achieved major successes, first in {\bf combinatorial optimization} \cite{LP.book, Sch-book} and then in {\bf approximation algorithms} \cite{Va.ApproximationAlgs}. These settings involved finding an {\bf  integral} optimal (or near-optimal) solution to the primal LP, together with an optimal (or near-optimal) solution to the dual LP. Complementary slackness conditions --- or an appropriate relaxation of these conditions --- helped transform the global task into a sequence of local tasks. Following these successes, researchers adapted it to solve {\em rational convex programs}, i.e.,  non-linear convex programs that always admit rational solutions, defined in \cite{Va.rational}. These include \cite{DPSV, JV.Equitable, Va.rational, duan2016improved}.  Indeed, today it is appropriate to call this method the {\bf primal-dual paradigm}.

 Despite its versatility, it was not initially clear that this method would be suitable for our setting. In our case, computing an optimal solution to the primal and dual LPs --- namely, a maximum-weight matching and an associated core imputation --- is straightforward. However, the core challenge lies in transforming this arbitrary core imputation into the leximin or leximax core imputation.
 
 One might consider locally improving the imputation while maintaining the optimality of both primal and dual solutions. However, because the primal-dual pair satisfies all complementary slackness conditions, how will the global problem be broken into local steps? To address this, we introduce a novel coordinating mechanism: a  {\bf clock} that governs the algorithm's progress and synchronizes all updates. For each time $t$, we can define a vertex set $W_t \subseteq (U \cup V)$ such that in the induced subgraph $H_t$ of $G$, the profit shares form the {\bf unique leximin core imputation} for $H_t$\footnote{The proof of this fact crucially depends on the correct resolution of some subtle algorithmic choices, see Sections \ref{sec.legitimate} and \ref{sec.right}.}. We prove that in polynomial time $W_t = (U \cup V)$, hence terminating in the required imputation. The proof of this fact crucially depends on the {\em correct resolution of some subtle algorithmic choices}, see Sections \ref{sec.legitimate} and \ref{sec.right}.

%One might consider locally improving the imputation while maintaining the optimality of both primal and dual solutions. However, because the primal-dual pair satisfies all complementary slackness conditions, how will the global problem be broken into local steps? To address this, we introduce a novel coordinating mechanism: a  {\bf clock} that governs the algorithm's progress and synchronizes all updates.

%Let the input be $G = (U, V, E)$ with weight function $w: E \rightarrow \cQ_+$. For each time $t$, we can define a vertex set $W_t \subseteq (U \cup V)$ such that in the induced subgraph $H_t$ of $G$, the profit shares form the {\bf unique leximin core imputation} for $H_t$\footnote{The proof of this fact crucially depends on the correct resolution of some subtle algorithmic choices, see Sections \ref{sec.legitimate} and \ref{sec.right}.}. We prove that in polynomial time $W_t = (U \cup V)$, hence terminating in the required imputation. 

While prior primal-dual algorithms have employed the notion of time, our use of it is distinct in both purpose and prominence. For example, in \cite{JV.Facility}, dual variables evolve over time, but this evolution is used merely to enforce relaxed complementary slackness conditions, which remain the primary engine for local-to-global transformation. Similarly, in \cite{JV.Equitable}, time progression aids in ensuring cross-monotonicity in cost-sharing methods, rather than in structuring the optimization process itself. In contrast, in our approach, time is the central coordinating device that transforms the global optimization problem into a sequence of structured local updates.

The broad outline given above is far from sufficient for designing the entire algorithm. The latter requires a precise understanding of the way core imputations allocate the total profit to agents and teams; in particular,  we need to answer the following questions:

\begin{enumerate}
	\item Do core imputations spread the profit more-or-less evenly or do they restrict profit to certain well-chosen agents? If the latter, what characterizes these ``chosen'' agents? 
	\item  An edge $(i, j)$ in the underlying graph is called a {\em team}. By definition, under any core imputation, the sum of profits of $i$ and $j$ is at least the weight of this edge. For which teams is the sum strictly larger? 
	\item How do core imputations behave in the presence of degeneracy? An assignment game is said to be {\em non-degenerate} if it admits a unique maximum weight matching. 
	\end{enumerate}

Our answers, using complementary slackness conditions and strict complementarity, are provided in Section \ref{sec.Complementarity}.

\section{Related Works}
\label{sec.related}

{\bf Comparison with the stable matching game:}
A useful perspective is provided by the stable matching game of Gale and Shapley \cite{GaleS}, which is a non-transferable utility (NTU) game. The core of this game is the set of all stable matchings --- each ensures that no coalition formed by one agent from each side of the bipartition has an incentive to secede. It is well known that this set forms a {\bf  finite distributive lattice}. Its top and bottom elements maximally favor one side and disfavor the other side of the bipartition, and the Deferred Acceptance Algorithm \cite{GaleS} computes precisely these two extremes.
 
However, in many applications, one desires stable matchings that are more equitable to both sides. This has motivated considerable research on finding such matchings efficiently, leading to efficient algorithms as well as intractability results, see \cite{Irving-egalitarian, Teo1998geometry, Cheng-Median, Cheng-Center, Two-sided-MMbook}. These efficient algorithms crucially use {\bf Birkhoff's Representation Theorem} for finite distributive lattices \cite{Birkhoff} and the notion of rotations \cite{Rotation-Irving}. 
 
 Shapley and Shubik \cite{Shapley1971assignment} showed that the set of its core imputations is the entire polytope of optimal solutions to the dual of the LP-relaxation of the maximum weight matching problem in the underlying bipartite graph. If this set is not a singleton, it is a dense set consisting of uncountably many elements. Furthermore, this set forms an {\bf infinite distributive lattice}. The top and bottom elements of this lattice form two extreme imputations, each maximally favoring one side and disfavoring the other side of the bipartition. So far efficient algorithms were known for computing only these two (least fair) imputations as outlined below\footnote{Their convex combination is also a core imputation; however, it is not clear what fairness properties it possesses.}. The current paper offers a recourse. 
 
 Define $u^h$ and $u^l$ as follows: For $i \in U$, let $u_i^h$ and $u_i^l$ denote the highest and lowest profits that $i$ accrues among all imputations in the core. Similarly, define $v^h$ and $v^l$. Shapley and Shubik  showed that the set $S$ of core imputations of the assignment game forms a distributive lattice; its two extreme imputations are $(u^h, v^l)$ and $(u^l, v^h)$; these can be efficiently obtained: Find a maximum weight (fractional) matching in $G$ and let its picked edges be $M \subseteq E$. Then $(u^h, v^l)$, is obtained by solving LP (\ref{eq.extreme2}) and and $(u^l, v^h)$ is obtained by solving an analogous LP.

 	\begin{maxi}
		{} {\sum_{i \in U}  {u_{i}}} 
			{\label{eq.extreme2}}
		{}
		\addConstraint{ u_i + v_j}{ = w_{ij} \quad }{\forall (i, j) \in M}
		\addConstraint{ u_i + v_j}{ \geq w_{ij} \quad }{\forall (i, j) \in E}
		\addConstraint{u_{i}}{\geq 0}{\forall i \in U}
		\addConstraint{v_{j}}{\geq 0}{\forall j \in V}
	\end{maxi}

{\bf Comparison with Dutta and Ray's egalitarian solution:}
In a prominent work, Dutta and Ray \cite{Dutta-Ray} showed that a convex game\footnote{The characteristic function of such a game is supermodular, see Definition \ref{def.cooperative-game}.} admits a unique core imputation which Lorenz dominates all other core imputations; they call it the {\bf egalitarian solution}. From our standpoint, \cite{Dutta-Ray} has three shortcomings: it does not apply to several key natural games, including the assignment, MST and max-flow games, since they are not convex; it is not efficiently computable for any non-trivial, natural game; and the Lorenz order is a partial, and not a total, order. Our work rectifies all  three shortcomings.

\medskip

%%%%%%%%%%%%%%

{\bf Other related results:} 
The importance of ``fair'' profit-sharing has been a consideration since ancient times, e.g., see the delightful and thought-provoking paper by Aumann and Maschler \cite{Aumann1985game}, which discusses problems from the 2000-year-old Babylonian Talmud on how to divide the estate of a dead man among his creditors. 

Below we state results giving LP-based core imputations for natural games; these are also candidates for the study of leximin and leximax core imputations. Deng et al. \cite{Deng1999algorithms} gave LP-based characterization of the cores of several combinatorial optimization games, including maximum flow in unit capacity networks both directed and undirected, maximum number of edge-disjoint $s$-$t$ paths, maximum number of vertex-disjoint $s$-$t$ paths, maximum number of disjoint arborescences rooted at a vertex $r$, and general graph matching game in which the weight of optimal fractional and integral matchings are equal (such a game has a non-empty core). 
 
Regarding the facility location game, first, Kolen \cite{Kolen-facility} showed that for the unconstrained facility location problem, each optimal solution to the dual of the classical LP-relaxation is a core imputation if and only if this relaxation has no integrality gap. Later, Goemans and Skutella \cite{Goemans-Skutella} showed a similar result for any kind of constrained facility location game. They also proved that in general, for facility locations games, deciding whether the core is non-empty and whether a given allocation is in the core is NP-complete. 

Samet and Zemel \cite{Samet-Zemel} study games which are generated by linear programming optimization problems; these are called LP-games. For such games, It is well known that the set of optimal dual solutions is contained in the core and \cite{Samet-Zemel} gives sufficient conditions under which equality holds. These games do not ask for integral solutions and are therefore different in character from the ones studied in this paper. 

Granot and Huberman \cite{Granot1981minimum, Granot-2-1984core} showed that the core of the minimum cost spanning tree game is non-empty and gave an algorithm for finding an imputation in it. Koh and Sanita \cite{Laura-Sanita} answer the question of efficiently determining if a spanning tree game is submodular; the core of such games is always non-empty. Nagamochi et al. \cite{Nagamochi1997complexity} characterize non-emptyness of core for the minimum base game in a matroid; the minimum spanning tree game is a special case of this game. \cite{Transportation-core} gives a characterization of the Owen core \cite{Owen.core}, i.e., the optimal dual solutions, of the transportation game, which is the same as the $b$-matching game 

To deal with games having an empty core, the following two notions have been given. The first is that of {\bf least core}, defined by Mascher et al. \cite{Leastcore-Maschler1979geometric}. If the core is empty, there will necessarily be sets $S \subseteq V$ such that $v(S) < p(S)$ for any imputation $v$. The least core maximizes the minimum of $p(S) - v(S)$ over all sets $S \subseteq V$, subject to $v(\emptyset) = 0$ and $v(V) = p(V)$. 

A more well known notion is that of {\bf nucleolus} \cite{Schmeidler1969nucleolus} which is contained in the least core, see Definition \ref{def.nucleolus}. For any imputation, $p$, define $p(S) - c(S)$ to be the {\bf surplus} of coalition $\emptyset \subset S \subset V$. The idea behind the nucleolus is to spread the surplus as evenly as possible across all coalitions by picking the {\bf leximin imputation over the surplus of coalitions}. The nucleolus always exists for any cooperative game, even if its core is empty and is unique. Furthermore, if the core of the game is non-empty, then the nucleolus lies in the core. Solymosi and Raghavan \cite{Solymosi-Nucleolus} gave a strongly polynomial algorithm for computing the nucleolus of the assignment game, see also Section \ref{sec.properties}.

\section{Definitions and Preliminary Facts}
\label{sec.prelim}

After giving some basic definitions, we will state Shapley and Shubik's results for the assignment game as well as insights obtained 
%in \cite{Va.New-characterizations} 
via complementarity. For an extensive coverage of cooperative game theory, see the book by Moulin \cite{Moulin2014cooperative}. The assignment game is widely applicable, e.g., \cite{Shapley1971assignment} presented it in the context of the housing market game in which the two sides of the bipartition are buyers and sellers of houses.

\begin{definition}
In a {\bf transferable utility (TU) market game}, utilities of the agents are stated in monetary terms and side payments are allowed. 
\end{definition}

\begin{definition}
	\label{def.cooperative-game}
	A {\bf cooperative game} consists of a pair $(N, c)$ where $N$ is a set of $n$ agents and $c$ is the {\bf characteristic function}; $c: 2^N \rightarrow \cQ_+$, where for $S \subseteq N, \ c(S)$ is the {\bf worth}\footnote{Consistent with the usual presentation of efficient algorithms and realistic models of computation, we will assume that all numbers given as inputs are rational, i.e., finite precision. This will reflect on most of our definitions as well.} that the coalition $S$ can generate by itself. $N$ is also called the {\bf grand coalition}. We will denote the worth of the game, i.e., $c(N)$, by $W_{\max}$. 
\end{definition}

\begin{definition}
	\label{def.imputation}	
	An {\bf imputation}
	is a function $p: N \rightarrow \cQ_+$ that gives a way of dividing the worth of the game, $c(N)$, among the agents, i.e., it satisfies $\sum_{i \in N} {p(i)} = c(N) = W_{\max}$. $p(i)$ is called the {\bf profit} of agent $i$. For a coalition $S \subseteq N$, define $p(S) = \sum_{i \in S} {p(i)}$, i.e., the total profit given to its agents. 
\end{definition}

\begin{definition}
	\label{def.core}
	An imputation $p$ is said to be in the {\bf core of the game} $(N, c)$ if for any coalition $S \subseteq N$, the total profit allocated to agents in $S$ is at least as large as the worth that they can generate by themselves, i.e., $ {p(S)} \geq c(S)$.
\end{definition}

\subsection{The Results of Shapley and Shubik for the Assignment Game}
\label{sec.SS-assn-game}

%Shapley and Shubik had described this game in the context of the housing market and had reduced it to the assignment game; the latter is a simpler setting.

The {\bf assignment game} consists of a bipartite graph $G = (U, V, E)$ and a weight function for the edges $w: E \rightarrow \cQ_+$, where we assume that {\bf each edge has positive weight}; otherwise the edge can be removed from $G$. The agents of the game are the vertices, $U \cup V$ and the {\bf total worth} of the game is the weight of a maximum weight matching in $G$; the latter needs to be distributed among the vertices. 

For this game, a coalition $(U' \cup V')$ consists of a subset of the agents, also called vertices,  with $U' \subseteq U$ and $V' \subseteq V$. The {\bf worth} of a coalition $(U' \cup V')$ is defined to be the weight of a maximum weight matching in the graph $G$ restricted to vertices in $(U' \cup V')$ and is denoted by $c(U' \cup V')$. The {\bf characteristic function} of the game is defined to be $c: 2^{U \cup V} \rightarrow \cQ_+$. 

An {\bf imputation}	consists of two functions $u: {U} \rightarrow \cQ_+$ and $v: {V} \rightarrow \cQ_+$ such that $\sum_{i \in U} {u_i} + \sum_{j \in V} {v_j} = c(U \cup V)$. Imputation $(u, v)$ is said to be in the {\bf core of the assignment game} if for any coalition $(U' \cup V')$, the total profit allocated to agents in the coalition is at least as large as the worth that they can generate by themselves, i.e., $\sum_{i \in U'} {u_i} +  \sum_{j \in V'} {v_j} \geq c(S)$. 

Linear program (\ref{eq.core-primal-bipartite}) gives the LP-relaxation of the problem of finding a maximum weight matching in $G$. In this program, variable $x_{ij}$ indicates the extent to which edge $(i, j)$ is picked in the solution. 
	\begin{maxi}
		{} {\sum_{(i, j) \in E}  {w_{ij} x_{ij}}}
			{\label{eq.core-primal-bipartite}}
		{}
		\addConstraint{\sum_{j: (i, j) \in E} {x_{ij}}}{\leq 1 \quad}{\forall i \in U}
		\addConstraint{\sum_{i: (i, j) \in E} {x_{ij}}}{\leq 1 }{\forall j \in V}
		\addConstraint{x_{ij}}{\geq 0}{\forall (i, j) \in E}
	\end{maxi}
Taking $u_i$ and $v_j$ to be the dual variables for the first and second constraints of (\ref{eq.core-primal-bipartite}), we obtain the dual LP: 
 	\begin{mini}
		{} {\sum_{i \in U}  {u_{i}} + \sum_{j \in V} {v_j}} 
			{\label{eq.core-dual-bipartite}}
		{}
		\addConstraint{ u_i + v_j}{ \geq w_{ij} \quad }{\forall (i, j) \in E}
		\addConstraint{u_{i}}{\geq 0}{\forall i \in U}
		\addConstraint{v_{j}}{\geq 0}{\forall j \in V}
	\end{mini}

The constraint matrix of LP (\ref{eq.core-primal-bipartite}) is totally unimodular, i.e., its every square sub-matrix has determinant of 0, $+1$ or $-1$. This fact leads to the next theorem, see \cite{LP.book}.

\begin{theorem}
	\label{thm.int-assn-LP}
The polytope defined by the constraints of LP (\ref{eq.core-primal-bipartite}) is integral; its vertices are matchings in the underlying graph. 
\end{theorem}

The proof of the next theorem hinges on the key fact stated in Theorem \ref{thm.int-assn-LP}.

\begin{theorem}
	\label{thm.SS}
	(Shapley and Shubik \cite{Shapley1971assignment})
	The dual LP, (\ref{eq.core-dual-bipartite}) completely characterizes the core of the assignment game.
\end{theorem}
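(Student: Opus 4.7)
The plan is to prove both set-inclusions between the set of core imputations and the set of optimal dual solutions to (\ref{eq.core-dual-bipartite}). The engine in both directions will be strong LP duality together with the fact, noted after LP (\ref{eq.core-primal-bipartite}), that the primal polytope is integral; hence the optimal value of (\ref{eq.core-primal-bipartite}) equals the weight of a maximum weight matching in $G$, which is $c(U \cup V)$. For every sub-coalition $(U' \cup V')$ one has the same pair of LPs restricted to the induced subgraph $G[U' \cup V']$, which will play a key role.

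First I would show that every optimal dual solution $(u,v)$ is a core imputation. Nonnegativity of $u_i, v_j$ is built into the feasibility of (\ref{eq.core-dual-bipartite}). By strong duality together with integrality of (\ref{eq.core-primal-bipartite}), the dual objective at $(u,v)$ equals $c(U \cup V)$, so the budget-balance condition of Definition on imputations holds. To verify the core inequality for a sub-coalition $(U' \cup V')$, observe that the restriction of $(u,v)$ to $U' \cup V'$ is feasible for the dual LP of the matching problem on $G[U' \cup V']$, since the relevant edge constraints are a subset of those satisfied by $(u,v)$. Weak duality applied to this restricted LP pair yields
\[
\sum_{i \in U'} u_i + \sum_{j \in V'} v_j \;\geq\; c(U' \cup V'),
\]
which is precisely the core constraint.

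Next I would show the converse: every core imputation $(u,v)$ is an optimal dual solution. Nonnegativity is again part of the definition of an imputation. For the edge feasibility $u_i + v_j \geq w_{ij}$, apply the core inequality to the two-agent sub-coalition $\{i,j\}$, whose worth is exactly $w_{ij}$ (the single edge is a maximum matching in the induced subgraph). Hence $(u,v)$ is dual feasible. Since $\sum_i u_i + \sum_j v_j = c(U \cup V)$ equals the optimal primal value and hence, by strong duality, the optimal dual value, $(u,v)$ is an optimal dual solution.

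The main obstacle, though modest, is the converse direction: one must exhibit the ``right'' sub-coalitions that force the dual constraints. Singletons give nonnegativity for free from the imputation definition, while two-vertex sub-coalitions $\{i,j\}$ spanning an edge of $G$ force the edge constraints. Once these are in place the rest is bookkeeping using strong LP duality and the integrality of the bipartite matching polytope.
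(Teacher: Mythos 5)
Your proof is correct and is the standard argument for this result; note that the paper itself states Theorem~\ref{thm.SS} as a cited result of Shapley and Shubik and provides no proof of its own, so there is nothing internal to compare against. Both directions check out: weak duality on the LP pair restricted to $G[U' \cup V']$ (whose primal optimum equals $c(U' \cup V')$ by total unimodularity) gives the core inequalities for optimal duals, and singleton plus two-vertex sub-coalitions recover dual feasibility from the core constraints, with strong duality and integrality of the primal polytope closing the loop on optimality in each direction.
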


By this theorem, the core is a convex polyhedron. Shapley and Shubik further showed that the set $S$ of core imputations of the assignment game forms a lattice under the following partial order on $S$: Given two core imputations $(u, v)$ and $(u', v')$, say that $(u, v) \leq (u', v')$ if for each $i \in U$, $u_i \geq u_i'$; if so, it must be that for each $j \in V, v_j \leq v_j'$. This partial order supports meet and join operations: given two core imputations $(u, v)$ and $(u', v')$, for each $i \in U$, the meet chooses the bigger of $u_i$ and $u_i'$ and the join chooses the smaller. It is easy to see that the lattice is distributive. Define $u^h$ and $u^l$ as follows: For $i \in U$, let $u_i^h$ and $u_i^l$ denote the highest and lowest profits that $i$ accrues among all imputations in the core. Similarly, define $v^h$ and $v^l$.

\begin{theorem}
	\label{thm.extreme}
		(Shapley and Shubik \cite{Shapley1971assignment})
Under the partial order defined above, the core of the assignment game forms a distributive lattice; its two extreme imputations are $(u^h, v^l)$ and $(u^l, v^h)$.
\end{theorem}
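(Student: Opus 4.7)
The plan is to reduce everything to the dual LP: by Theorem \ref{thm.SS} the core coincides with the set of optimal solutions of (\ref{eq.core-dual-bipartite}), so I would work with dual-optimal pairs $(u,v)$ throughout. First I would verify the implicit claim in the order definition, that $u_i \ge u'_i$ for all $i \in U$ forces $v_j \le v'_j$ for all $j \in V$. Fix any maximum weight matching $M$. Complementary slackness between $M$ (optimal primal) and any core imputation $(u,v)$ (optimal dual) yields $u_i + v_j = w_{ij}$ on every $M$-edge and $v_j = 0$ at every $M$-unsaturated $j \in V$ (symmetrically on $U$). Hence on $M$-edges, $v_j = w_{ij} - u_i \le w_{ij} - u'_i = v'_j$, while on $M$-unsaturated $j$ one has $v_j = 0 = v'_j$; so $v \le v'$ coordinatewise, as required.

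Next I would establish closure under meet and join. Define
\[
(u,v) \wedge (u',v') := \bigl(\max(u_i,u'_i),\ \min(v_j,v'_j)\bigr), \qquad (u,v) \vee (u',v') := \bigl(\min(u_i,u'_i),\ \max(v_j,v'_j)\bigr).
\]
For dual feasibility of the meet on an edge $(i,j)$, assume WLOG $u_i \ge u'_i$: if also $v_j \le v'_j$ the constraint is just $u_i + v_j \ge w_{ij}$, and otherwise $u_i + v'_j \ge u'_i + v'_j \ge w_{ij}$; the join is symmetric, and nonnegativity is immediate. The key point is optimality: since $\max(a,b) + \min(a,b) = a+b$ coordinatewise, the dual objectives of the meet and join sum to $2\,c(U \cup V)$, and both are dual-feasible hence at least $c(U \cup V)$ by weak duality, forcing equality for each. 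Distributivity is inherited coordinatewise from the distributivity of $\max$ and $\min$.

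Finally, to identify the extremes, note that the core is a polytope, so it has finitely many vertices, and the coordinatewise optimum of any linear functional $u_i$ or $v_j$ over the core is attained at a vertex. Iterating the meet over all vertex imputations (a core element by induction using the closure result) produces the imputation whose $U$-entries are the vertexwise maxima, i.e., $u^h$, and whose $V$-entries are the vertexwise minima, i.e., $v^l$. Thus $(u^h, v^l)$ lies in the core; by definition of $u^h$ it satisfies $(u^h, v^l) \le (u,v)$ for every core imputation $(u,v)$, so it is the bottom element of the lattice. The argument for $(u^l, v^h)$ as the top is symmetric.

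The main obstacle is not feasibility but ensuring that meet and join actually land on the \emph{optimal} dual face. The clean device is the pointwise identity $\max + \min = a+b$, which pins the summed dual objective to exactly $2\,c(U \cup V)$ and, combined with weak duality applied to each piece, squeezes both back onto the optimal face.
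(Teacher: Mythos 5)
The paper does not prove this statement at all --- it is quoted from Shapley and Shubik, and the surrounding text only describes the order, asserts the meet/join operations, and remarks that distributivity ``is easy to see.'' So there is no in-paper proof to compare against; judged on its own, your argument is correct and is essentially the classical Shapley--Shubik proof. The three pieces all check out: (i) complementary slackness against a fixed maximum weight matching $M$ (tightness on $M$-edges, zero duals at $M$-unsaturated vertices) correctly forces $v \le v'$ from $u \ge u'$; (ii) the closure argument is sound --- feasibility of $(\max(u_i,u_i'),\min(v_j,v_j'))$ and its dual, and the identity $\max(a,b)+\min(a,b)=a+b$ combined with weak duality squeezing both the meet and the join onto the optimal dual face, which by Theorem \ref{thm.SS} is exactly the core; and (iii) since the core is a bounded polytope, taking the meet over its finitely many vertices produces a core imputation whose $U$-coordinates are exactly $u^h$ and whose $V$-coordinates are exactly $v^l$, which is then the bottom element by definition of $u^h$. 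One small point you leave implicit but which is immediate: that the coordinatewise meet really is the greatest lower bound (any lower bound must dominate $\max(u_i,u_i')$ coordinatewise on $U$). No gaps.
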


The two extreme imputations can be efficiently obtained as follows. First find a maximum weight matching in $G$; a fractional one found by solving LP (\ref{eq.core-primal-bipartite}) will suffice. Let its picked edges be $M \subseteq E$. The $U$-optimal core imputation, $(u^h, v^l)$, is obtained by solving LP (\ref{eq.extreme}):

 	\begin{maxi}
		{} {\sum_{i \in U}  {u_{i}}} 
			{\label{eq.extreme}}
		{}
		\addConstraint{ u_i + v_j}{ = w_{ij} \quad }{\forall (i, j) \in M}
		\addConstraint{ u_i + v_j}{ \geq w_{ij} \quad }{\forall (i, j) \in E}
		\addConstraint{u_{i}}{\geq 0}{\forall i \in U}
		\addConstraint{v_{j}}{\geq 0}{\forall j \in V}
	\end{maxi}

\section{Complementarity Applied to the Assignment Game}
\label{sec.Complementarity}

 In this section, we provide answers to the three questions which were raised in the Section \ref{sec.framework}.   The design of our primal and dual update steps critically uses these insights. 
 
  Observe that the worth of the assignment game is given by an optimal solution to the primal LP and its core imputations are given by optimal solutions to the dual LP. Hence the fundamental fact connecting these two solutions is complementary slackness conditions. Although this fact is well known, somehow its implications  were not explored before\footnote{The conference version of the results of this section appeared in \cite{Va.New-characterizations}.}.

\subsection{The first question: Allocations made to agents by core imputations}
\label{sec.vertices}

\begin{definition}
	\label{def.agent}
	A generic agent in $U \cup V$ will be denoted by $q$. We will say that $q$ is:
	\begin{enumerate}
		\item {\em essential} if $q$ is matched in every maximum weight matching in $G$.
		\item {\em viable} if there is a maximum weight matching $M$ such that $q$ is matched in $M$ and another, $M'$ such that $q$ is not matched in $M'$. 	
		\item {\em subpar} if for every maximum weight matching $M$ in $G$, $q$ is not matched in $M$. 	
		\end{enumerate}
\end{definition}

\begin{definition}
\label{def.agent-paid}
	Let $y$ be an imputation in the core. We will say that $q$ {\em gets paid in $y$} if $y_q > 0$ and {\em does not get paid} otherwise. Furthermore, $q$ is {\em paid sometimes} if there is at least one imputation in the core under which $q$ gets paid, and it is {\em never paid} if it is not paid under every imputation. 
\end{definition}

\begin{theorem}
	\label{thm.vertices}
	 For every agent $q \in (U \cup V)$: 
		\[ q \ \mbox{is paid sometimes}  \ \iff \ q \ \mbox{is essential} \]  
\end{theorem}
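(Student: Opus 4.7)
The plan is to translate the combinatorial conditions into LP language using Theorem~\ref{thm.SS} and then invoke standard LP duality --- in particular, strict complementary slackness. By Theorem~\ref{thm.SS}, core imputations of $G$ are exactly the optimal solutions $(u,v)$ of the dual LP (\ref{eq.core-dual-bipartite}); by total unimodularity of the constraint matrix of LP (\ref{eq.core-primal-bipartite}), every max-weight matching $M$ in $G$ corresponds to an integer optimal primal $x^\ast$, and every fractional primal optimum is a convex combination of such integer optima. For a vertex $q$ (WLOG $q \in U$), write $s_q(x) := 1 - \sum_{(q,j) \in E} x_{qj}$ for the slack of the primal constraint at $q$; then $q$ is matched in a given max matching iff $s_q = 0$ at the corresponding integer $x$.

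For the forward direction, I would fix a core imputation $(u,v)$ with $u_q > 0$ and any max-weight matching $M$ with associated integer primal optimum $x^\ast$. Applying complementary slackness between $(u,v)$ and $x^\ast$ to the pair (dual variable $u_q$, primal constraint $\sum_j x_{qj} \leq 1$) forces $s_q(x^\ast) = 0$, so $q$ is matched in $M$. Since $M$ was arbitrary, $q$ is matched in every max-weight matching and is essential.

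For the backward direction, I would invoke strict complementary slackness. Assuming $q$ is essential, $s_q = 0$ at every integer primal optimum; and by the convex-combination decomposition of any fractional primal optimum into integer optima (here the integrality of the bipartite matching polytope is crucial), $s_q(x^\ast) = 0$ at every primal optimum. Strict complementary slackness then supplies an optimal primal-dual pair $(x^\ast, (u^\ast, v^\ast))$ for which, at each complementary pair, exactly one element is zero; for the pair $(s_q, u_q)$, since $s_q(x^\ast) = 0$, strictness forces $u^\ast_q > 0$. The core imputation $(u^\ast, v^\ast)$ then witnesses that $q$ is paid sometimes.

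The main obstacle is the backward direction, where the argument leans on strict complementary slackness --- a classical but non-elementary LP result. The combinatorial content --- reducing essentiality to the statement ``the primal constraint at $q$ is tight at every primal optimum'' --- rests on the integrality of the bipartite matching polytope, which is standard for the assignment game; this is the crucial ingredient that lets the LP machinery engage with the combinatorial notion of essentiality. A more concrete alternative, which I would fall back on if strict CS is to be avoided, is to directly construct a core imputation with $u_q = W - c((U \cup V) \setminus \{q\}) > 0$ by extending the $V$-optimal core imputation of $G \setminus \{q\}$; verifying the dual constraints $u_q + v_j \geq w_{qj}$ along edges incident to $q$ then reduces to the inequality $w_{qj} + c((U \cup V) \setminus \{q,j\}) \leq c(U \cup V)$, which is immediate since $\{(q,j)\}$ plus a max matching on $G \setminus \{q,j\}$ is a matching in $G$.
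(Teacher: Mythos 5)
Your main argument is correct, but note that this paper does not prove Theorem~\ref{thm.vertices} at all --- it is imported verbatim from \cite{Va.New-characterizations} --- so there is no in-paper proof to compare against; your complementarity-based route is precisely the kind of argument that reference uses (the paper even advertises these as ``insights obtained via complementarity''). Both directions of your primary argument are sound: the forward direction is plain complementary slackness between an arbitrary optimal dual with $u_q>0$ and the indicator vector of an arbitrary maximum weight matching (optimal primal by total unimodularity), and the backward direction correctly combines integrality of the optimal face (so $s_q\equiv 0$ on all primal optima, since the slack is affine and vanishes at every vertex of that face) with Goldman--Tucker strict complementarity to produce an optimal dual with $u^\ast_q>0$, which is a core imputation by Theorem~\ref{thm.SS}. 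One caveat on your ``concrete fallback'': the dual constraint $u_q+v'_j\geq w_{qj}$ with $u_q=c(U\cup V)-c((U\cup V)\setminus\{q\})$ requires $v'_j\geq c((U\cup V)\setminus\{q\})-c((U\cup V)\setminus\{q,j\})$, i.e., that $j$ receives its full marginal contribution in $G\setminus\{q\}$; the core of $G\setminus\{q\}$ only guarantees the reverse inequality in general, so you genuinely need the Shapley--Shubik/Demange fact that the $V$-optimal imputation attains $v^h_j=c(N)-c(N\setminus\{j\})$ --- a result not stated in this paper --- before the step becomes ``immediate.'' The strict-complementarity version needs no such supplement and should be your primary proof.
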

	
\begin{proof}
The proof follows by applying complementary slackness conditions and strict complementarity to the primal LP (\ref{eq.core-primal-bipartite}) and dual LP (\ref{eq.core-dual-bipartite}); see \cite{Sch-book} for formal statements of these facts. By Theorem \ref{thm.SS}, talking about imputations in the core of the assignment game is equivalent to talking about optimal solutions to the dual LP.

 Let $x$ and $y$ be optimal solutions to LP (\ref{eq.core-primal-bipartite}) and LP (\ref{eq.core-dual-bipartite}), respectively. By the Complementary Slackness Theorem, for each $q \in (U \cup V): \ y_q \cdot (x(\delta(q)) - 1) = 0$. 

$(\Rightarrow)$  Suppose $q$ is paid sometimes. Then, there is an optimal solution to the dual LP, say $y$, such that $y_q > 0$. By the Complementary Slackness Theorem, for any optimal solution, $x$, to LP (\ref{eq.core-primal-bipartite}), $x(\delta(q)) = 1$, i.e., $q$ is matched in $x$. Varying $x$ over all optimal primal solutions, we get that $q$ is always matched. In particular, $q$ is matched in all optimal assignments, i.e., integral optimal primal solutions, and is therefore essential. This proves the forward direction.

$(\Leftarrow)$ Strict complementarity implies that corresponding to each agent $q$, there is a pair of optimal primal and dual solutions, say $x$ and $y$, such that either $y_q = 0$ or $x(\delta(q)) = 1$ but not both. Assume that $q$ is essential, i.e., it is matched in every integral optimal primal solution. 
 
 We will use Theorem \ref{thm.int-assn-LP}, which implies that every fractional optimal primal solution to LP (\ref{eq.core-primal-bipartite}) is a convex combination of integral optimal primal solutions. Therefore $q$ is fully matched in every optimal solution, $x$, to LP (\ref{eq.core-primal-bipartite}), i.e., $x(\delta(q)) = 1$, so there must be an optimal dual solution $y$ such that $y_q > 0$. Hence $q$ is paid sometimes, proving the reverse direction. 
\end{proof}

Theorem \ref{thm.vertices} is equivalent to the following. For every agent $q \in (U \cup V)$: 
		\[ q \ \mbox{is never paid} \ \iff \ q \ \mbox{is not essential} \]  
		
Thus core imputations pay only essential agents and each of them is paid in some core imputation.  Since we have assumed that the weight of each edge is positive, so is the worth of the game, and all of it goes to essential agents. Hence we get:

\begin{corollary}
	\label{cor.vertices}
	In the assignment game, the set of essential agents is non-empty and in every core imputation, the entire worth of the game is distributed among essential agents; moreover, each of them is paid in some core imputation. 
\end{corollary} 

By Corollary \ref{cor.vertices}, the leximin and leximax core imputations are only concerned with the profits of essential agents; the profits of the rest of the agents are zero. 

Corollary \ref{cor.vertices} raises the following question: Can't a non-essential agent, say $q$, team up with another agent, say $p$, and secede, by promising $p$ almost all of the resulting profit? The answer is ``No'', because the dual (\ref{eq.core-dual-bipartite}) has the constraint $y_q + y_p \geq w_{qp}$. Therefore, if $y_q = 0$, $y_p \geq w_{q p}$, i.e., $p$ will not gain by seceding together with $q$.

\subsection{The second question: Allocations made to teams by core imputations}
\label{sec.edges}

\begin{definition}
	\label{def.team}
	By a {\em team} we mean an edge in $G$; a generic one will be denoted as $e = (u, v)$. We will say that $e$ is:
	\begin{enumerate}
		\item {\em essential} if $e$ is matched in every maximum weight matching in $G$.
		\item {\em viable} if there is a maximum weight matching $M$ such that $e \in M$, and another, $M'$ such that $e \notin M'$. 
		\item {\em subpar} if for every maximum weight matching $M$ in $G$, $e \notin M$. 
	\end{enumerate}
	\end{definition}
	
\begin{definition}
\label{def.team-paid}
	 Let $y$ be an imputation in the core of the game. We will say that $e$ is {\em fairly paid in $y$} if $y_u + y_v = w_e$ and it is {\em overpaid} if $y_u + y_v > w_e$\footnote{Observe that by the first constraint of the dual LP (\ref{eq.core-dual-bipartite}), these are the only possibilities.}. Finally, we will say that $e$ is {\em always paid fairly} if it is fairly paid in every imputation in the core.
\end{definition}

\begin{theorem}
	\label{thm.edges}
	 For every team $e \in E$: 
		\[ e \ \mbox{is always paid fairly} \ \iff \ e \ \mbox{is viable or essential} \]
\end{theorem}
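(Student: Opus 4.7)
The statement is an ``if and only if'' about a single edge $e = (i,j)$, so I would prove the two directions separately, and both will rest on LP duality for the pair (\ref{eq.core-primal-bipartite})--(\ref{eq.core-dual-bipartite}) combined with Theorem \ref{thm.SS}.

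\textbf{Direction ($\Leftarrow$): viable or essential $\Rightarrow$ always tight.} If $e$ is viable or essential, then by Definition \ref{def.team} there is a maximum weight matching $M$ with $e \in M$. The indicator vector $x^M$ of $M$ is an optimal solution of the primal LP (\ref{eq.core-primal-bipartite}), and satisfies $x^M_{ij} = 1 > 0$. Let $(u,v)$ be any core imputation; by Theorem \ref{thm.SS} it is an optimal dual solution. Complementary slackness applied to the primal variable $x_{ij}$ then forces the dual constraint $u_i + v_j \geq w_{ij}$ to hold with equality, i.e.\ $u_i + v_j = w_{ij}$. Since $(u,v)$ was arbitrary, $e$ is always tight.

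\textbf{Direction ($\Rightarrow$): always tight $\Rightarrow$ viable or essential.} I would prove the contrapositive. Assume $e$ is subpar, so $e$ belongs to no maximum weight matching; I want to exhibit a core imputation $(u^*,v^*)$ with $u^*_i + v^*_j > w_{ij}$. The first step is to upgrade ``$e$ lies in no integer optimal matching'' to ``$x^*_{ij} = 0$ in every optimal solution $x^*$ of (\ref{eq.core-primal-bipartite})''. This uses the total unimodularity already noted after LP (\ref{eq.core-dual-bipartite}): the primal feasible region is an integer polytope whose vertices are matchings, the optimal face is therefore the convex hull of the maximum weight matchings, and hence $x^*_{ij}$ equals a convex combination of entries from indicator vectors of maximum weight matchings, all of which are $0$ at $e$. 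The second step is to apply the Goldman--Tucker strict complementary slackness theorem to the primal--dual pair (\ref{eq.core-primal-bipartite})--(\ref{eq.core-dual-bipartite}): there exist optimal $x^*$ and $(u^*,v^*)$ such that for every edge $f=(k,l)$ exactly one of $x^*_f > 0$ or $u^*_k + v^*_l > w_{kl}$ holds. Since $x^*_{ij} = 0$ by the previous step, this forces $u^*_i + v^*_j > w_{ij}$. By Theorem \ref{thm.SS}, $(u^*,v^*)$ is a core imputation, so $e$ is not always tight, completing the contrapositive.

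\textbf{Main obstacle.} The $(\Leftarrow)$ direction is a one-line application of (weak) complementary slackness, so the substantive content is in $(\Rightarrow)$. The potentially delicate point there is the passage from ``$e$ lies in no integer optimum'' to ``$x^*_{ij}=0$ in every (possibly fractional) LP optimum''; without integrality of the polytope one could imagine a fractional optimum smearing weight onto $e$. Total unimodularity of (\ref{eq.core-primal-bipartite}) kills this concern cleanly. After that, strict complementary slackness is a standard off-the-shelf tool, and no further combinatorial argument about matchings is needed.
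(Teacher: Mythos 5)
Your proof is correct, and it follows exactly the route this paper attributes to the cited source: the paper itself gives no proof of Theorem \ref{thm.edges} (it is imported from \cite{Va.New-characterizations}), but the surrounding text makes clear the intended argument is complementary slackness between LP (\ref{eq.core-primal-bipartite}) and LP (\ref{eq.core-dual-bipartite}) together with Theorem \ref{thm.SS}. Your two refinements --- using total unimodularity to pass from ``in no integral optimum'' to ``zero in every LP optimum,'' and invoking Goldman--Tucker strict complementarity to produce an optimal dual that is over-tight on a subpar edge --- are precisely the standard way to make that argument rigorous, so there is nothing to flag.
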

	
\begin{proof}
The proof is similar to that of Theorem \ref{thm.vertices}. Let $x$ and $y$ be optimal solutions to LP (\ref{eq.core-primal-bipartite}) and LP (\ref{eq.core-dual-bipartite}), respectively. By the Complementary Slackness Theorem, for each $e = (u, v) \in E: \ \ x_e \cdot (y_u + y_v - w_e) = 0$.

$(\Leftarrow)$ To prove the reverse direction, suppose $e$ is viable or essential. Then there is an optimal solution to the primal, say $x$, under which it is matched. Therefore,  $x_e > 0$. Let $y$ be an arbitrary optimal dual solution. Then, by the Complementary Slackness Theorem, $y_u + y_v = w_e$, i.e., $e$ is fairly paid in $y$. Varying $y$ over all optimal dual solutions, we get that $e$ is always paid fairly. 

$(\Rightarrow)$ To prove the forward direction, we will use strict complementarity. It implies that corresponding to each team $e$, there is a pair of optimal primal and dual solutions $x$ and $y$ such that either $x_e = 0$ or $y_u + y_v = w_e$ but not both. 

Assume that team $e$ is always fairly paid, i.e., under every optimal dual solution $y$, $y_u + y_v = w_e$. By strict complementarity, there must be an optimal primal solution $x$ for which $x_e > 0$. Theorem \ref{thm.int-assn-LP} implies that $x$ is a convex combination of optimal assignments. Therefore, there must be an optimal assignment in which $e$ is matched. Therefore $e$ is viable or essential and the forward direction also holds. 
\end{proof}

\begin{corollary}
	\label{cor.endpoint-essential}
	Let $e = (u, v)$ be a subpar team. Then at least one of $u$ and $v$ is essential. 
\end{corollary}

\begin{proof}
In every maximum weight matching, at least one of $u$ and $v$ must be matched, since otherwise $e$ should get matched. By Theorem \ref{thm.edges}, there is a core imputation under which $e$ is overpaid. Therefore, under this imputation, at least one of $u$ and $v$ is paid and by Theorem \ref{thm.vertices}, that vertex must be essential. 
\end{proof}

Negating both sides of the implication proved in Theorem \ref{thm.edges}, we get the following  implication. For every team $e \in E$: 
		\[ e \ \mbox{is subpar} \ \iff \ e \ \mbox{is sometimes overpaid} \]

Clearly, this statement is equivalent to the statement proved in Theorem \ref{thm.edges} and hence contains no new information. However, it provides a new viewpoint. These two equivalent  statements yield the following assertion, which at first sight seems incongruous with what we desire from the notion of the core and the just manner in which it allocates profits:

\begin{center}
{\em Whereas viable and essential teams are always paid fairly, subpar teams are sometimes overpaid.}
\end{center}

How can the core favor subpar teams over viable and essential teams? An explanation is provided in the Introduction, namely a subpar team $(i, j)$ gets overpaid because $i$ and $j$ create worth by playing in competent teams with other agents. Finally, we observe that contrary to Corollary \ref{cor.vertices}, which says that the set of essential agents is non-empty, it is easy to construct examples in which the set of essential teams may be empty. 

By Theorem \ref{thm.vertices}, different essential agents are paid in different core imputations and by Theorem \ref{thm.edges}, different subpar teams are overpaid in different core imputations. This raises the following question: is there a core imputation that simultaneously satisfies all these conditions? Lemma  \ref{lem.simultaneous} gives a positive answer. 

\begin{lemma}
		\label{lem.simultaneous}
	For the assignment game, there is a core imputation satisfying:
\begin{enumerate}
	\item a agent $q \in U \cup V$ gets paid if and only if $q$ is essential.
	\item a team $e \in E$ gets overpaid if and only if $e$ is subpar. 
\end{enumerate}
\end{lemma}

\begin{proof}
By Theorem \ref{thm.vertices}, for each essential agent $q$, there is a core imputation under which $q$ gets paid and by Theorem \ref{thm.edges}, for each subpar team $e$, there is a core imputation under which $e$ gets overpaid. Consider a convex combination of all these imputations; it must give positive weight to each of these imputations. Clearly, this is a core imputation. 

We observe that none of the core imputations pay non-essential agents or overpay non-subpar teams. Consequently, the imputation constructed above satisfies the conditions of the theorem. 
\end{proof}

The first part of Lemma \ref{lem.simultaneous} provides useful information about the leximin core imputation, see Corollary \ref{cor.positive}; its second part is given for completeness. Let $I$ denote the assignment game $G = (U, V, E)$ with weight function $w: E \rightarrow \cQ_+$ for edges. Let $C(I)$ denote its set of core imputations of $I$ and let $V' \subseteq V$ be its set of essential players. 

\begin{definition}
	\label{def.max-min}
Imputation $p \in C(I)$ is said to be a {\em max-min fair core imputation} if it satisfies:
$$ p \in \arg \max_{q \in C(I)} \left\{ \min_{v \in V'} \{q(v)\} \right\} .$$
\end{definition}

\begin{corollary}
	\label{cor.positive}
In the leximin core imputation, the profits of all essential agents are positive. 
\end{corollary}
	
\begin{proof}
	By the first part of Lemma \ref{lem.simultaneous}, the profit of the minimum profit essential agent in a max-min fair core imputation is positive. Since this is also the smallest component of the leximin core imputation, the corollary follows.
\end{proof}

\subsection{The third question: Degeneracy}
\label{sec.degeneracy}

The answer to the third question is useful for pedagogical purposes, in case the reader wishes to understand the algorithm for the case of a non-degenerate game. In this case, the underlying structure is simpler: there are no viable teams or agents. This follows from Theorems \ref{thm.vertices} and \ref{thm.edges}. How do viable teams and agents behave with respect to core imputations if the given game is degenerate? The answer is provided below.  
 
\begin{corollary}
	\label{cor.degen}
	In the presence of degeneracy, imputations in the core of an assignment game treat:
	\begin{itemize}
			\item  viable agents in the same way as subpar agents, namely they are never paid.  
		\item viable teams in the same way as essential teams, namely they are always fairly paid. 
	\end{itemize}
\end{corollary}

\begin{definition}
	\label{def.degen}
An assignment game is said to be {\bf non-degenerate} if it has a unique maximum weight matching. If so, by Theorem \ref{thm.vertices} and \ref{thm.edges}, all vertices and edges are either essential or subpar. 
\end{definition}

\section{Solution Concepts: Properties and Examples} 
\label{sec.properties}

In this section, we will formally define our three solution concepts and establish basic properties; in Example \ref{ex.different}, all three yield distinct imputations. This example was made possible by insights gained through the combinatorial nature of our algorithms. We will also compare our proposed solutions with two prominent alternatives --- the {\bf nucleolus} and the core imputation that maximizes {\bf Nash social welfare} --- and show why the latter options fall short on our criteria.  
  
 A leximin allocation maximizes the smallest component and subject to that, it maximizes the second smallest, and so on; leximax is analogous. Let $U_e \subseteq U$ and $V_e \subseteq V$ be the sets of essential players of the assignment game, $G = (U, V, E), \ w: E \rightarrow \cQ_+$. As stated in Theorem \ref{thm.vertices}, every core imputation allocates the entire profit to players in $U_e \cup V_e$ only.

\begin{definition}
	\label{def.leximin}
For each core imputation of the assignment game $G = (U, V, E), \ w: E \rightarrow \cQ_+$, sort the allocations made to players $U_e \cup V_e$ in increasing order. Then the imputation yielding the lexicographically largest list will be called a {\bf leximin core imputation}. Next, sort the allocations in decreasing order and the imputation yielding the lexicographically largest list will be called a {\bf leximax core imputation}. 
\end{definition}

\begin{definition}
	\label{def.spread}
Define the {\bf spread} of core imputation $p$ to be the difference between the largest and smallest profit given to essential players, i.e., 
$$  \spread(p)  :=  (p_{\max} - p_{\min}), $$ 
$$ \mbox{where} \ \ p_{max} = \max_{a \in (U_e \cup V_e)} {p(a)} \ \ \mbox{and} \ \ p_{min} = \min_{a \in (U_e \cup V_e)}  {p(a)} .$$ 
An imputation that minimizes the spread among all core imputations is called a {\bf min-spread core imputation}.  
\end{definition}

%%%%%%%%%%%%%%%%%%%%%%%%%%%%%%%%%%%%%%%%%%%%%%%%%%%%%%%%

\begin{example}
	\label{ex.different}
In the graph\footnote{In the examples, essential and viable edges are represent by solid lines and subpar edges by dotted lines.} shown in Figure \ref{fig.3-SC}, the unique leximin and leximax core imputations for vertices $(u_1, u_2, u_3, u_4, v_1, v_2, v_3, v_4)$ are $(100, 36, 28, 90, 36, 40, 40, 28)$ and $(70, 6, 3, 65, 66, 70, 65, 53)$, respectively; the corresponding sorted orders are $28, 28, 36, 36, 40, 40, 90, 100$ and \\ $70, 70, 66, 65, 65, 53, 6, 3$. A min-spread imputation is $(92, 28, 28, 90, 44, 48, 40, 28)$, having spread 64. The last imputation was obtained via the min-spread algorithm, given in Section \ref{sec.min-spread}, as follows: Assume that Phase 1 of this algorithm ends with the leximin imputation. Phase 2 terminates in Case 1 with $\Omega = 92$ and outputs the min-spread imputation given above. 
\end{example}

%%%%%%%%%%%%%%%%%%%%%%%%%%%%%%%%%%%%%%%%%%%%%%%%%%%%%%%%%%%

\begin{figure}[h]
\begin{center}
\includegraphics[width=3in]{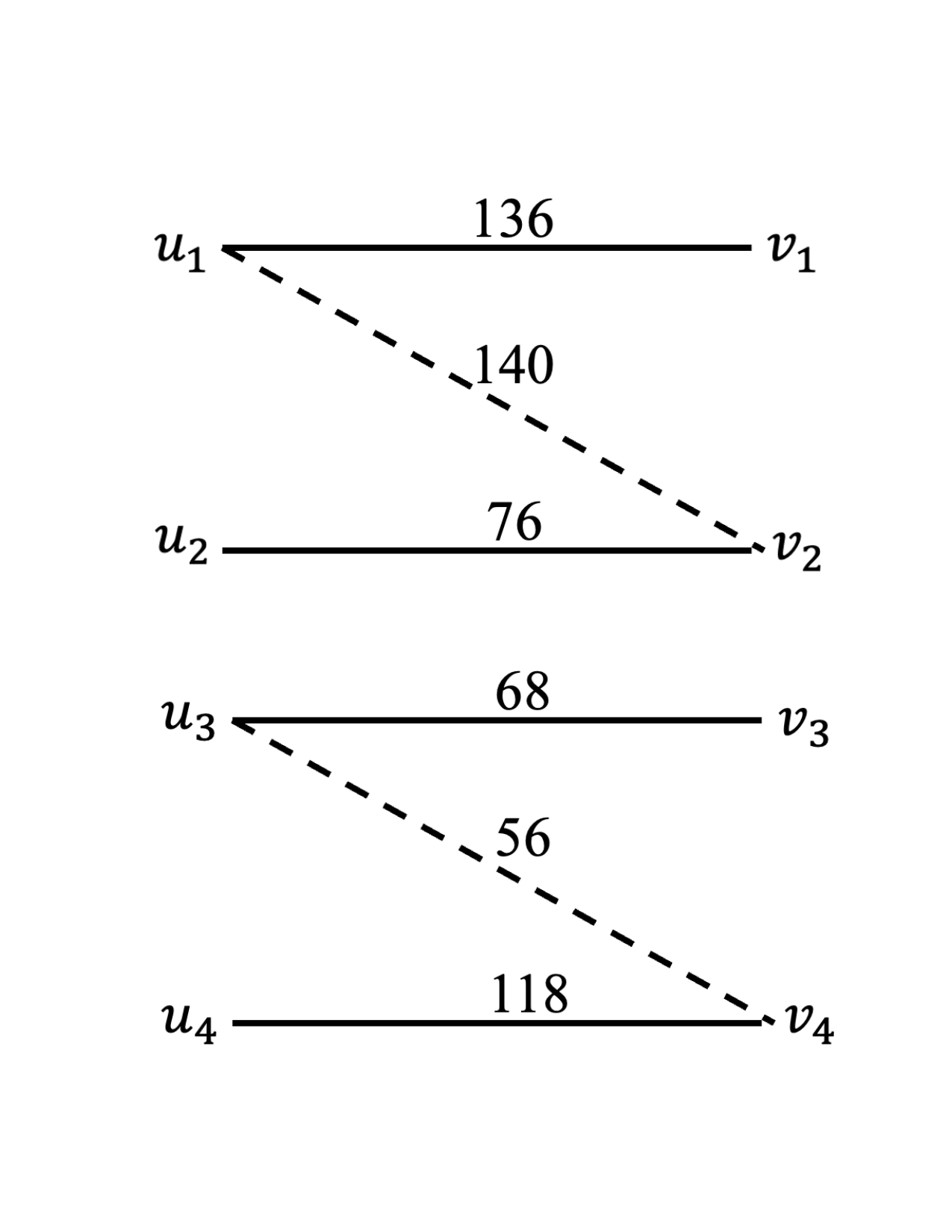}
\caption{The graph for Example \ref{ex.different}.}  
\label{fig.3-SC}
\end{center}
\end{figure}

%%%%%%%%%%%%%%%%%%%%%%%%%%%%%%%%%%%%%%%%%%%%%%%%%%%%%%%%%%%%%%%

Moulin \cite{Moulin-axioms} showed that the set of leximin core imputations for a convex game is a singleton, i.e., these games have a unique leximin imputation. In Lemma \ref{lem.leximin-unique} we show this fact for any cooperative game. The proof for leximax core imputations is analogous. In contrast, as shown in Example \ref{ex.not-unique}, a min-spread core imputation is not unique. 

We first need an elementary fact. Let $L$ be a list of numbers, $a_1 \leq a_2 \leq \ldots \leq a_k$ such that at least two of them are distinct. Consider a {\bf cycle} formed by edges $(a_i, a_{i+1})$, for $1 \leq i \leq k$ and $(a_k, a_1)$. For each of these $k$ edges, compute the average of its endpoints and let $L'$ denote this list of numbers. 

\begin{lemma}
	\label{lem.cycle}
$L'$ is lexicographically larger than $L$. 
\end{lemma}

\begin{proof}
	Let $l$ be the smallest index such that $a_l < a_{l+1}$. Clearly, the average on the first $l-1$ edges will be $a_1$ and on the remaining edges, it will be strictly bigger. Therefore, after sorting $L'$, its first $l-1$ entries will be the same as that of $L$ and its $l^{th}$ entry will be strictly bigger. The lemma follows. 
\end{proof}

Let $(N, c)$ be a cooperative game having a non-empty core, where $N$ is a set of $n$ agents and $c$ is the characteristic function. First observe that if $p_1: N \rightarrow \cQ_+$ and $p_2: N \rightarrow \cQ_+$ are core imputations for this game then their convex combination is also a core imputation, since the profit given to any coalition $S \subseteq N$ is at least $c(S)$. Therefore, the core of the game $(N, c)$ is a convex set.

\begin{lemma}
	\label{lem.leximin-unique}
A game $(N, c)$ having a non-empty core admits a unique leximin imputation.
\end{lemma}

\begin{proof}
The proof follows via a contradiction. Suppose there are two leximin imputations $p_1$ and $p_2$.  Let $p_3: N \rightarrow \cQ_+$ denote the average of $p_1$ and $p_2$. Since the core is closed under convex combinations, $p_3$ is also a core imputation. We will show that it is lexicographically larger than $p_1$ and $p_2$, hence giving a contradiction.  

Since $p_1$ and $p_2$ are both leximin imputations, the multi-set of profit shares under $p_1$ and $p_2$ must the same, say given by sorted list $L$. Let $L'$ denote the sorted list of profit shares under $p_3$. Let $S_1$ and $S_2$ denote the agents of $N$ sorted by increasing profit shares according to $p_1$ and $p_2$, respectively. The $i^{th}$ agent of $S_1$ is denoted by $S_1(i)$; similarly for $S_2$. Since the list of profit shares of the two imputations is the same, $p_1(S_1(i)) = p_2(S_2(i))$. 

If $S_1(i) = S_2(i)$, we will say that the $i^{th}$ entry of $L$ is {\bf settled}, since it will carry over unchanged to $L'$. Next assume that $S_1(i) \neq S_2(i)$. Then $i$ will participate in a {\bf cycle}, i.e., there are indices $1 \leq l_1 < l_2 < \ldots l_k \leq n$ such that $i$ is one of these indices and
\begin{enumerate}
	\item for $1 \leq j \leq k, \ l_j$ is not settled, i.e., $S_1(l_j) \neq S_2(l_{j})$, 
	\item  for $ \  1 \leq j < k, \ S_1(l_j) = S_2(l_{j+1})$. Furthermore $\ S_1(l_k) = S_2(l_1) $. 
\end{enumerate}

Clearly, the $k$ entries $l_1 < l_2 < \ldots l_k$ in list $L'$ will be given by averaging operation of Lemma \ref{lem.cycle} and therefore they will be lexicographically larger than the corresponding $k$ entries in $L$. Furthermore, this is true for each cycle formed by $S_1$ and $S_2$.The settled entries of $L$ carry over to $L'$. Hence $L'$ is lexicographically larger than $L$.
\end{proof}

\begin{corollary}
	\label{cor.unique}
	The assignment game has a unique leximin core imputation.
\end{corollary}

%%%%%%%%%%%%%%

\begin{example}
	\label{ex.not-unique}
To demonstrate that the min-spread core imputation is not unique, we augment the graph in Figure~\ref{fig.3-SC} by adding an essential edge $(u_5, v_5)$ with weight 120. The imputation $(92, 28, 28, 90, 44, 48, 40, 28)$ for $(u_1, u_2, u_3, u_4, v_1, v_2, v_3, v_4)$ can be enhanced with a range of possibilities for $u_5$ and $v_5$ ---  all the way from $(28, 92)$ to $(92, 28)$ --- while maintaining the sum of profit shares at 120. Each possibility is a min-spread imputation. 
	
\end{example}

\subsection{A Comparison with Two Well-Known Solution Concepts}
\label{sec.comparison}

\begin{definition}
	\label{def.nucleolus}
	Let $(N, c)$ be a cooperative game with non-empty core. 
	For an imputation $p: {V} \rightarrow \cR_+$, let $\theta(p)$ be the vector obtained by sorting the $2^{|V|} - 2$ values $p(S) - c(S)$ for each $\emptyset \subset S \subset V$ in non-decreasing order. Then the unique imputation, $p$, that lexicographically maximizes $\theta(p)$ is called the {\bf nucleolus}.\end{definition}

Although the definition of nucleolus calls for spreading the surplus evenly over all coalitions, for the assignment game, a key simplification occurs: once the surplus of singletons and doubletons (i.e., teams) is fixed, the surplus of all coalitions is determined. For this reason, the nucleolus of the assignment game can be computed in strongly polynomial time via a combinatorial algorithm \cite{Solymosi-Nucleolus}, thereby meeting our second criterion\footnote{In other natural games, the nucleolus attempts to spread the profit over all exponentially many coalitions and is known to be NP-hard, e.g., MST game \cite{MST_nucleolus_np-hard}, max-flow game \cite{Flow_nucleolus_np-hard}, and $b$-matching game when an edge can be matched at most once \cite{b-matching-nucleolus-NP}.}. 

However, the nucleolus does not satisfy our first criterion for two key reasons. First, there are quadratically more doubletons than singletons. Second, since the worth of a singleton is zero, its entire profit directly counts as surplus, while the surplus of a doubleton is inherently more constrained. As a result, the nucleolus disproportionately focuses on balancing surpluses among doubletons, often at the expense of fairness to individual agents. Even in a simple instance (Example~\ref{ex.min-max}), the nucleolus fails to meet the individual fairness conditions captured by any of our three proposed solution concepts. In this example, it is not even max-min or min-max fair, hence it neither benefits low-earning agents nor reduces the payoffs of high-earning agents. Moreover, it produces a relatively large spread of 60, resulting in significant disparity in individual payoffs. Theorem \ref{thm.infinite} shows this for an infinite family of games.

\begin{definition}
	\label{def.NSW}
A core imputation which maximizes the product of profits of essential agents is called a {\bf maximum Nash social welfare (NSW) core imputation} for the assignment game. 
\end{definition}

The product of $k$ numbers having a fixed total is maximized by making them as equal as possible; hence the Nash social welfare is attempting to make the individual profit shares as equal as possible, e.g., in Example \ref{ex.min-max}, the spread of the NSW imputation is 40, same as that of leximin and leximax imputations. 

A core imputation of Definition \ref{def.NSW} is captured by a solution to convex program (\ref{eq.Nash}). Notice that the objective function is based on profits of essential agents only. As shown in Example \ref{ex.min-max}, this is not a rational convex program, as defined in \cite{Va.rational}, and therefore it does not admit a combinatorial polynomial time algorithm. Hence it does not satisfy our second criterion.

 	\begin{maxi}
		{} {\sum_{i \in U_e}  {\log(u_{i})} + \sum_{j \in V_e}  {\log(v_{i})}} 
			{\label{eq.Nash}}
		{}
		\addConstraint{\sum_{i \in U_e}  {(u_{i})} + \sum_{j \in V_e}  {(v_{i})}}{ = W_{\max}}
		\addConstraint{ u_i + v_j}{ \geq w_{ij} \quad }{\forall (i, j) \in E}
		\addConstraint{u_{i}}{\geq 0}{\forall i \in U_e}
		\addConstraint{v_{j}}{\geq 0}{\forall j \in V_e}
		\addConstraint{u_{i}}{= 0}{\forall i \in (U - U_e)}
		\addConstraint{v_{j}}{= 0}{\forall j \in (V - V_e)}
	\end{maxi}

\begin{example}
	\label{ex.min-max}
In the graph shown in Figure \ref{fig.min-max}, the leximin and leximax core imputations for vertices $(u_1, u_2, v_1, v_2)$ are $(40, 30, 30, 70)$ and $(55, 45, 15, 55)$, respectively; the corresponding sorted orders are $30, 30, 40, 70$ and $55, 55, 45, 15$. Both are min-spread imputations, having a spread of 40. The nucleolus assigns to $(u_1, u_2, v_1, v_2)$ profits of $(50, 20, 20, 80)$ as can be checked by considering the 10 singletons and doubletons. It has a high spread of 60, thereby creating disparity in profits. Additionally, it is neither max-min nor min-max fair, thereby it neither helps the low profit agents nor penalizes high earning agents. 

For computing the core imputation that maximizes the Nash social welfare, assume that the profit share of vertex $v_1$ is $x$. Then it is easy to see that the profit shares of $(u_1, u_2, v_2)$ will be $(70-x, \ 60-x, \ x+40)$. The product of these four values will be maximized for an irrational $x$. The values of $(u_1, u_2, v_1, v_2)$ were numerically computed (using Wolfram Alpha) and are approximately $(43.521, 33.521, 26.479, 66.479)$. 
\end{example}

%%%%%%%%%%%%%%%%%%%%%%%%%%%%%%%%%%%%%%%%%%%%%%%%%%%%%%%%%%%

\begin{figure}[h]
\begin{center}
\includegraphics[width=2.2in]{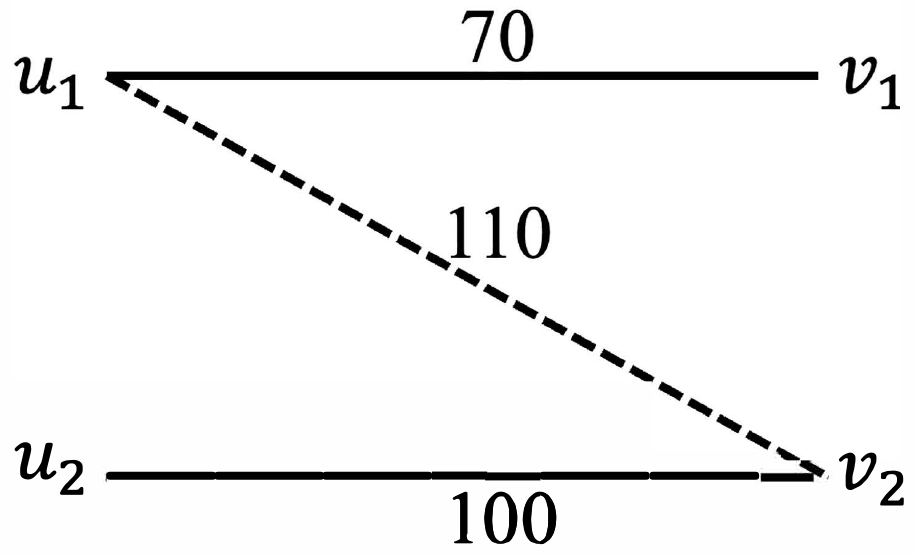}
\caption{The graph for Example \ref{ex.min-max}}.  
\label{fig.min-max}
\end{center}
\end{figure}

%%%%%%%%%%%%%%%%%%%%%%%%%%%%%%%%%%%%%%%%%%%%%%%%%%%%%%%%%%%%%%%

\begin{theorem}
		\label{thm.infinite}
There is an infinite family of assignment games in each of which the imputation given by the nucleolus does not satisfy any of our three solution concepts for individual level fairness.
\end{theorem}

\begin{proof}
	Extend the graph of Example \ref{ex.min-max} into an infinite family of games, $G_1, G_2, \ldots$, by adding in game $G_k$ a matching of $k$ edges of weight 60 each. 

In the game $G_k$, the nucleolus assigns to $(u_1, u_2, v_1, v_2)$ profits of $(50, 20, 20, 80)$ and it assigns profits of 30 to each endpoint of the added matching, leading to a spread of 60. This imputation is not max-min or min-max fair for the same reason as given in Example \ref{ex.min-max}. To show that it is not min-spread, consider the imputation that assigns to $(u_1, u_2, v_1, v_2)$ profit shares of $(40, 30, 30, 70)$, and 30 to each endpoint of the added matching, leading to a spread of 40. 
\end{proof}

%%%%%%%%%%%%%%%%%%%%%%%%%%%%

\section{High Level Algorithmic Ideas}
\label{sec.high}

The ideas underlying Algorithm \ref{alg.min} are presented in this section and Section \ref{sec.remaining-leximin}. Our algorithm starts by computing an arbitrary core imputation, say $(u. v)$. We will say that an {\bf edge is tight} if it is fairly paid; as shown in Section \ref{sec.Complementarity}, essential and viable edges are fairly paid. The following invariant ensures that the algorithm always works with a core imputation.   

{\bf Invariant 1:}  {\bf Throughout the run of the algorithm, the dual satisfies:} 

\begin{enumerate}
	\item  Every essential and viable edge is tight.
	\item  Only essential vertices have positive duals. 
\end{enumerate}

For a proof of the next lemma, see Section \ref{sec.Classify}. 

\begin{lemma}
	\label{lem.classify}
The classification of edges and vertices into essential, viable and subpar can be accomplished in $O(m)$ calls to a weighted bipartite matching algorithm. 
\end{lemma}

$T_0$ denotes the {\bf set of tight edges}, and is initialized with all essential and viable edges. Let $H_0 = (U, V, T_0)$. We will partition the connected components of $H_0$ into the two sets defined below. 

\begin{definition}
	\label{def.unique-imp}
A connected component, $C$, of $H_0$, is called a {\bf unique imputation component} if the profit shares of vertices of $C$ remain unchanged over all possible core imputations.
\end{definition}

\begin{definition}
	\label{def.fundamental}
A connected component, $C$, of $H_0$ which is not a unique imputation component is called a {\bf fundamental component}.
\end{definition}

We will first explain these ideas in the simpler setting that {\bf the given game is non-degenerate} and therefore has no viable vertices and edges, see Definition \ref{def.degen}, and only in the end, in Section \ref{sec.char}, will we deal with the general case. Since $G$ is non-degenerate, unique imputation components are unmatched vertices and fundamental components are essential edges. {\bf Fundamental components are the building blocks which our algorithm works with}; the profit shares of their  vertices get modified in the dual updates. This update gets blocked when a subpar edge goes exactly tight. Much of the algorithm is concerned with appropriately dealing with the latter event.

\begin{definition}
\label{def.min-both}
For any connected component $C$ of the tight subgraph $H$, $U(C)$ and $V(C)$ will denote the vertices of $C$ in $U$ and $V$, respectively. Furthermore,	$\lleft(C)$ will denote $U(C) \cap U_e$ and $\rright(C)$ will denote $V(C) \cap V_e$. 
\end{definition}

%\subsection{Repairing a Fundamental Component}
%\label{sec.fund-repair} 

\begin{definition}
\label{def.min-both}
	At any point in the algorithm, for any connected component $C$ of $H$, $\min(C)$ will denote the {\bf smallest profit share of an essential vertex in $C$}. We will say that {\bf $C$ has min on both sides} if there is a vertex in $\lleft(C)$ as well as one in $\rright(C)$ whose profit is $\min(C)$; if so, $C$ is said to be {\bf fully repaired}. Otherwise, we will say that {\bf $C$ has min in $\lleft(C)$} ($\rright(C)$) if there is a vertex in $\lleft(C)$ ($\rright(C)$) whose profit is $\min(C)$.
\end{definition}

The nature of the leximin order dictates that in order to leximin-improve the current core imputation, we {\bf must first try to increase the profit share of essential vertices having minimum profit}; we call the  process {\bf repairing the fundamental component}. Let $C$ be a fundamental component which has $C$ has min in $\lleft(C)$, see Definition \ref{def.min-both}. Since the essential edges of $C$ need to always be tight, the only  way of repairing $C$ is to rotate $C$ clockwise, see Definition \ref{def.rotate}. If $C$ has min in $\rright(C)$, we will need to rotate $C$ anti-clockwise.

\begin{definition}
	\label{def.rotate}
	Let $C$ be a connected component in the tight subgraph. By {\bf rotating $C$ clockwise} we mean continuously increasing the profit of $\lleft(C)$ and decreasing the profit of $\rright(C)$ at unit rate, and by {\bf rotating $C$ anti-clockwise} we mean continuously decreasing the profit of $\lleft(C)$ and increasing the profit of $\rright(C)$ at unit rate. A component $C$ will be repaired until either $C$ has min on both sides or a subpar edge $e$ incident at $C$ goes exactly tight. 
\end{definition}

{\bf Two Key Ideas:} \\
{\bf 1).} A central role is played by variable $\Omega$ which helps {\bf synchronize} all events, as mentioned in Section \ref{sec.framework}. $\Omega$ is initialized to the minimum profit share of an essential vertex under the starting core imputation, $(u, v)$. Thereafter, $\Omega$ is {\bf raised at unit rate} until the algorithm terminates. Therefore, in effect $\Omega$ defines a {\bf notion of time} and we will call it {\bf the clock}. $\Omega$ provides a way of synchronizing the simultaneous repair of several components. 

{\bf 2).} Can tight subpar edges be added to $T$ when they go exactly tight? If not, why not and which tight subpar edges need to be added to $T$ and when? Our algorithm adds a tight subpar edge to the tight subgraph only if it is {\bf legitimate}, see Definition \ref{def.legitimate}. This rule is critical for ensuring correctness of the algorithm, see Section \ref{sec.legitimate}.

\begin{definition}
	\label{def.legitimate}
Let $(i, j)$ be a tight subpar edge, with $i \in C$ and $j \in C'$, where connected component $C \in \Act$. We will say that $(i, j)$ is {\bf legitimate} if either $C$ has min in $\lleft(C)$ and $i \in \rright(C)$ or $C$ has min in $\rright(C)$ and $i \in \lleft(C)$.
\end{definition}

At any point in the algorithm, $G$ is partitioned into four sets, $\Init, \Fro$, $\Full, \Act$. They are initialization as follows: all unique imputation components are moved into $\Fro$ and all fundamental components are moved into $\Init$, see the pseudocode in Algorithm \ref{alg.min}. The arrows in Figure \ref{fig.flowchart} indicate the flow of fundamental components between these four sets. The purpose of $\Act$ is to carry out repair of fundamental components. As $\Omega$ increases, at some point one of three possible events may happen: The first is that $\Omega$ may become equal to $\min(C)$, for some component $C \in \Init$. If so, $C$ gets moved from $\Init$ to $\Act$. The second and the third events deal with dual and primal updates and are described in detail in Sections \ref{sec.dual-update} and \ref{sec.primal-update}, respectively.

%%%%%%%%%%%%%%%%%%%%%%%%%%%%%%%%%%%%%%%%%%%%%%%%%%%%%%%%%%%

\begin{figure}[h]
\begin{center}
\includegraphics[height=2.2in]{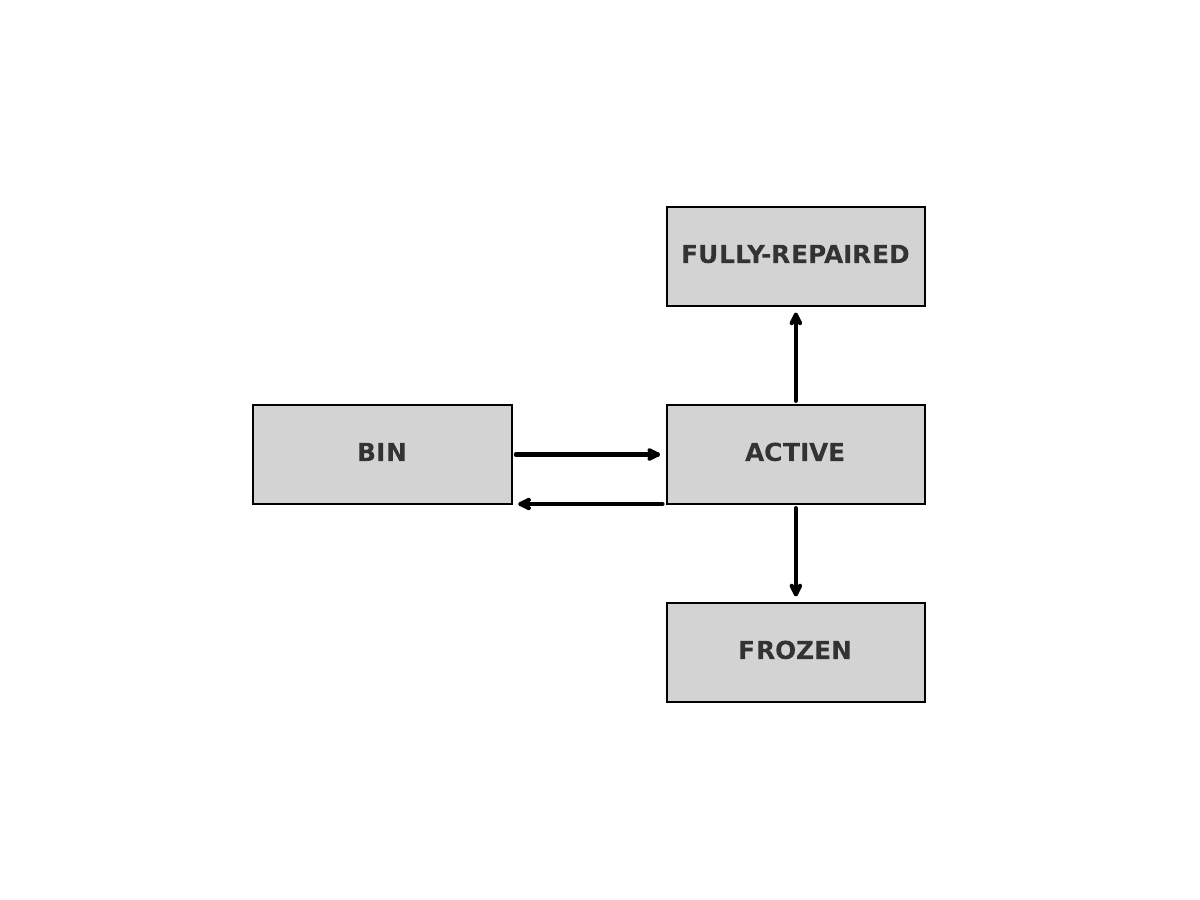}
\caption{Arrows indicate the flow of fundamental components in Algorithm \ref{alg.min}.} 
\label{fig.flowchart}
\end{center}
\end{figure}

%%%%%%%%%%%%%%%%%%%%%%%%%%%%%%%%%%%%%%%%%%%%%%%%%%%%%%%%%%%%%%%

%\subsection{The Notion of Legitimate Tight Edges}
%\label{sec.legitimate}

\subsection{Dual Update: Repairing Valid Components} 
\label{sec.dual-update}

Each connected component in $\Act$ is called {\bf valid} and its form is described in Definition \ref{def.valid-comp}.

\begin{definition}
	\label{def.valid-comp}
A {\bf valid component}, $D$, consists of either a single fundamental component, $C_0$, or a set of fundamental components, $C_0, C_1, \ldots, C_k$, connected up via legitimate tight subpar edges. Because of the way $D$ is formed, it satisfies the following properties.
\begin{enumerate}
	\item Shrink each fundamental component, $C_i$, to a single vertex, $w_i$, and if a legitimate edge runs between $C_i$ and $C_j$, then add an edge between $w_i$ and $w_j$. Then the resulting graph should form a tree, say $\tau$, which is rooted at $w_0$, where $w_0$ is obtained by shrinking $C_0$; for a proof, see Lemma \ref{lem.tree-valid}. If $w_j$ is a child of $w_i$ in $\tau$, then we will say that {\bf component $C_j$ is a child of $C_i$}. The reflexive, transitive closure of the relation child will be called {\bf descendent}. Thus all fundamental components in $D$ are descendent of $C_0$. Let $w$ be a descendent of $w_0$ in $\tau$ and let the unique path from $w_0$ to $w$ in $\tau$ be $w_0, w_1, \ldots , w_l = w$. The  fundamental components $C_0, C_1, \ldots C_l$, together with the subpar edges between consecutive components, is called the {\bf path from $C_0$ to $C_l$}.  
	\item $\min(C_0) = \min_{i = 0}^k {\min(C_i)}$, and $C_0$ does not have min on both sides.
	\item First assume $C_0$ has min in $\lleft(C_0)$. Each subpar edge of $D$ was legitimate at the time it was added to $T$. Therefore, by Definition \ref{def.legitimate}, if component $C_j$ is a child of $C_i$ then the edge connecting them must run between $\rright(C_i)$ and $\lleft(C_j)$. 
		\item Next assume $C_0$ has min in $\rright(C_0)$. In this case, if component $C_j$ is a child of $C_i$ then the edge connecting them must run between $\lleft(C_i)$ and $\rright(C_j)$. 
	\item The composite component $D$ does not have min on both sides. Clearly, $\min(D) = \min(C_0)$. 
\end{enumerate}
\end{definition}

The repair of a valid component $D$ is similar to that of a fundamental component, namely if $D$ has min in $\lleft(D)$ then rotate $D$ clockwise and otherwise rotate $D$ anti-clockwise.  Assume that a valid component, $D$, has min on both sides. If so, should we declare $D$ fully repaired? It turns out that only an appropriate sub-component of $D$, $\Ssuba(D)$ should be declared fully repaired and moved into $\Full$, and the algorithm decomposes $(D - \Ssuba(D))$ into fundamental components and moves them back to $\Init$, see full details in  Section \ref{sec.right}. Identifying the right one is critical, as shown in Example \ref{ex.fully-repaired}. 

\begin{comment}

\begin{remark}
	\label{rem.repair}
The root component, $C_0$, of valid component, $D$, contains an essential vertex having minimum profit in $D$. Other fundamental components of $D$ may also contain essential vertices having minimum profit; however, since $D$ does not have min on both sides, these vertices will be on the same side as in $C_0$. $C_0$ dictates the kind of rotation, clockwise or anti-clockwise, which is applied to $D$. Observe that leximin-improving fundamental components containing minimum-profit essential vertices may require leximin-deteriorating other fundamental components of $D$.  
\end{remark}

\end{comment}

%\bigskip

%\setcounter{figure}{1} 
%\begin{figure}[H]

\begin{figure}

	\begin{wbox}
		\begin{alg}
		\label{alg.min}
		{\bf (Algorithm for Leximin Core Imputation)}\\

\begin{enumerate}
	\item {\bf Initialization:} 
	\begin{enumerate}
		\item $(u, v) \leftarrow$ an arbitrary core imputation. 		
		\item $T_0 \leftarrow $ essential and viable edges; $H_0 = (U, V, T_0)$. 
		\item $\Fro \leftarrow $ unique imputation components of $H_0$.
		\item $\Init \leftarrow $ fundamental components of $H_0$. 
		\item $\Omega \leftarrow$ minimum profit share of an essential vertex under $(u, v)$.  
		\item $T \leftarrow T_0$. 
	\end{enumerate}
			
\bigskip
		
	\item {\bf While} $(\Init \cup \Act) \neq \emptyset$  {\bf do}: \\
		   At unit rate, raise $\Omega$ and repair all components $C \in \Act$.  \\
           \hspace*{2mm}  {\bf If}: 
		\begin{enumerate}
		\item $\exists C \in \Init \ s.t. \ \min(C) = \Omega$ {\bf then} move $C$ to $\Act$
		\item $\exists C \in \Act$ is fully repaired {\bf then} move $\Ssuba(C)$ to $\Full$ \\
		and the fundamental components of $(C - \Ssuba(C))$ to $\Init$. 
		\item $\exists$ legitimate tight edge $(i, j)$. \\
		Assume its endpoints are in $C$ and $C'$, where $C \in \Act$.  
		
			 \begin{enumerate} 
			 \item  $T \leftarrow T \cup \{(i, j)\}$. 
			 \item Merge $C, C'$ and $(i, j)$ into one component, say $D$. 
			 \item {\bf If} $C' \in \Init$ {\bf then} move $D$ to $\Act$.   
			 \item {\bf If} $C' \in \Act$ {\bf then} move $\Ssubb(D)$ to $\Full$\\ and the fundamental components of $(D - \Ssubb(D))$ to $\Init$.
			 \item {\bf If} $C' \in \Fro$ {\bf then} move $\Ssubc(D)$ to $\Fro$ \\ and the fundamental components of $(D - \Ssubc(D))$ to $\Init$. 
			 \item {\bf If} $C' \in \Full$ {\bf then} move $\Ssubc(D)$ to $\Full$ \\ and the fundamental components of $(D - \Ssubc(D))$ to $\Init$.
				\end{enumerate} 
				
		\item Cleanup $T$: remove all tight subpar edges\\  \hspace*{1.88cm} connecting fundamental components in $\Init$.

			\end{enumerate} 
		\item {\bf Output:} Output the current imputation and HALT.  
	\end{enumerate} 
	% \bigskip
			\end{alg}
	\end{wbox}
\end{figure}

\bigskip

\section{Remaining Ideas of Leximin Algorithm}
\label{sec.remaining-leximin}

\subsection{Procedures for Classifying Vertices and Edges}
\label{sec.Classify}

In this section, we will give polynomial time procedures for partitioning vertices and edges according to the classification given in Definitions \ref{def.agent} and \ref{def.team}. This is required by the algorithms given in Sections \ref{sec.high} and \ref{sec.leximax}.

Let $W_{\max}$ be the worth of the given assignment game, $G = (U, V, E), \ w: E \rightarrow \cQ_+$. For each edge $e$, find the worth of the game with $e$ removed. If it is less than $W_{max}$ then $e$ is essential. Now pick an edge $e'$ which is not essential and let its weight be $w$. Remove the two endpoints of $e'$ and compute the worth of the remaining game. If it is $(W_{\max} - w)$ then $e'$ is matched in a maximum weight matching. Furthermore, since it is not essential, it is viable. The remaining edges are subpar. 

Next, for each vertex $v$, consider the game with player $v$ removed and find its worth. If it is less than $W_{max}$ then $v$ is essential. The endpoints of all viable edges are either essential or viable. Therefore, dropping the essential vertices from these yield viable vertices. Finally, a  vertex which does not have an essential or viable edge incident at it is subpar.   

Using the current best strongly polynomial weighted bipartite matching algorithm \cite{Kuhn1955Hungarian}, which takes $O(n^3)$ time, the classification requires a total time of $O(mn^3)$, see also Lemma \ref{lem.classify}.

\subsection{The Notion of Legitimate Tight Edges}
\label{sec.legitimate}

Next, we address the question: which tight edges are added to $T$ and when. By the rule given in Definition \ref{def.legitimate}, edge $(i, j)$ is declared legitimate when it is just about to go under-tight. Clearly, if the algorithm were to wait any longer, the edge would go under-tight and the dual would be infeasible. Hence the rule given supports the correctness of the algorithm.

\begin{example}
	\label{ex.exactly-tight}
In Figure \ref{fig.exactly-tight}, the solid edges are essential and dotted ones are subpar. Assume that the starting profit shares of the essential vertices $(u_1, v_1, u_2, v_2)$ are $(60, 40, 10, 50)$, respectively. Since $v_3$ is a subpar vertex it is moved to $\Fro$ and the two essential edges are moved to $\Init$. Also, $\Omega$ is initialized to $10$. Observe that subpar edge $(u_1, v_3)$ is tight at the start of the algorithm. The question is what is the ``right'' rule for adding a tight subpar edge $e$ to $T$? We study two rules:

{\bf Rule 1).}  {\em Subpar edge $e$ is added to $T$ exactly when it goes tight}.\\
If so, during Initialization, $(u_1, v_3)$ is added to $T$. At $\Omega = 10$, $(u_2, v_2)$ is moved from $\Init$ to $\Act$. $\Omega$ is raised at unit rate and when it becomes $20$, subpar edge $(u_1, v_2)$ becomes exactly tight and is added to $T$ and fundamental component $(u_1, v_1)$ is moved from $\Init$ to $\Act$. Now, since $(u_1, v_3)$ is in $T$, the entire graph moves into $\Fro$, and the algorithm ends with the imputation $(60, 40, 20, 40)$ for $(u_1, v_1, u_2, v_2)$. 

{\bf Rule 2).}  {\em Subpar edge $e$ is added to $T$ exactly when it is found to be legitimate}, see Definition \ref{def.legitimate}. \\
Under this rule, $(u_1, v_3)$ is not added to $T$ during Initialization. At $\Omega = 10$, $(u_2, v_2)$ is moved from $\Init$ to $\Act$. When $\Omega = 20$, subpar edge $(u_1, v_2)$ is found to be legitimate; it is added to $T$ and fundamental component $(u_1, v_1)$ is moved from $\Init$ to $\Act$. Now, the valid component consisting of edges $(u_1, v_1), (u_1, v_2), (u_2, v_2)$ undergoes repair. At $\Omega = 30$, it has min on both sides and is moved into $\Full$. Since $(\Init \cup \Act) = \emptyset$, the algorithm terminates with the imputation $(70, 30, 30, 30)$ for $(u_1, v_1, u_2, v_2)$. 

Rule 2 is the correct rule; its imputation leximin dominates that of Rule 1. 
\end{example}

%%%%%%%%%%%%%%%%%%%%%%%%%%%%%%%%%%%%%%%%%%%%%%%%%%%%%%%%%%%

\begin{figure}[h]
\begin{center}
\includegraphics[width=2.4in]{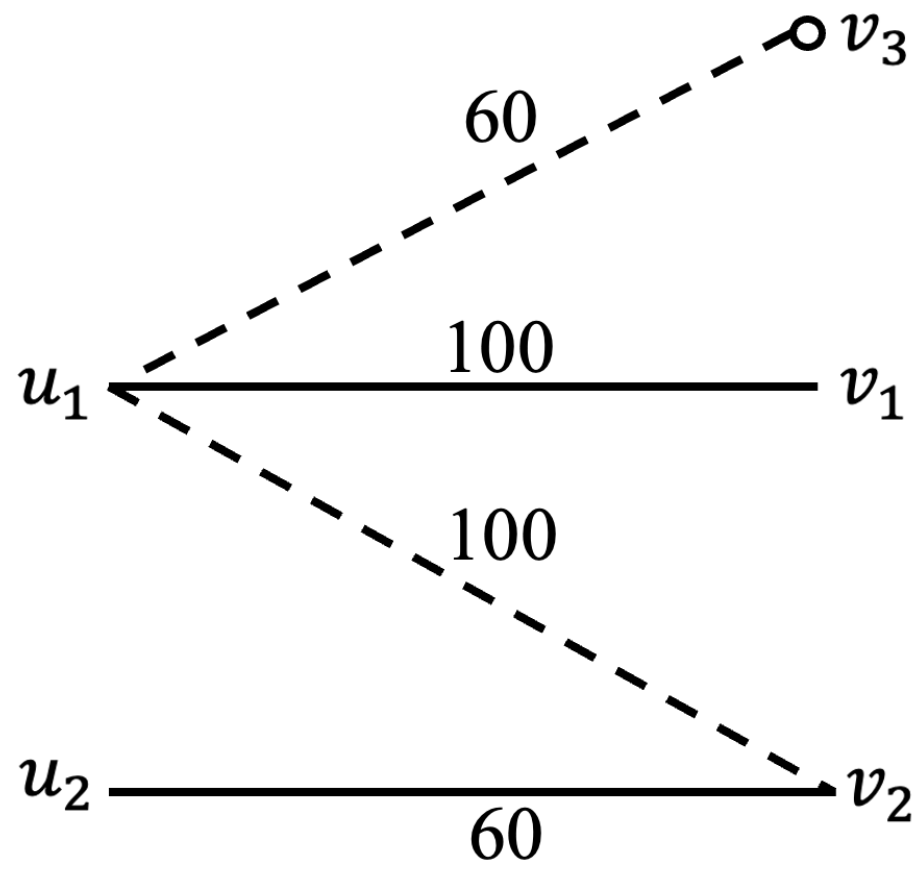}
\caption{The graph for Example \ref{ex.exactly-tight}.}
\label{fig.exactly-tight}
\end{center}
\end{figure}

%%%%%%%%%%%%%%%%%%%%%%%%%%%%%%%%%%%%%%%%%%%%%%%%%%%%%%%%%%%%%%%

By the rule given in Definition \ref{def.legitimate}, edge $(i, j)$ is declared legitimate when it is just about to go under-tight. Clearly, if the algorithm were to wait any longer, the edge would go under-tight and the dual would be infeasible. Hence the rule given supports the correctness of the algorithm.

%\medskip

\subsection{Primal Update: Dealing with Legitimate Subpar Edges} 
\label{sec.primal-update}

Let $e$ be a legitimate subpar tight edge which runs between components $C$ and $C'$, where $C \in \Act$. The following four cases arise; the cases can be executed in any order. Let $D$ be the component obtained in Step 2(c)(v) by merging the newly tight edge $(i, j)$ together with components $C$ and $C'$. For the exact definition of the minimum fully repaired sub-component of $D$ for each case, i.e., $\Ssuba(D), \Ssubb(D)$ and $\Ssubc(D)$, see Section \ref{sec.right}.
\begin{enumerate}
	\item $C' \in \Act$. If so, it must be the case that $C$ and $C'$ have min on opposite sides and the edge $(i, j)$ goes from one side of $C$ to the other side of $C'$. As a result,  $D$ has min on both sides. The algorithm moves the minimum fully repaired sub-component of $D$, $\Ssubb(D)$ to $\Full$ and the fundamental components of $(D - \Ssubb(D))$ to $\Init$.  
	\item $C' \in \Fro$. In order to prevent $(i, j)$ from going under-tight, we need to stop the repair of $C$. This is done by moving $\Ssubc(D)$ to $\Fro$ and the fundamental components of $(D - \Ssubc(D))$ to $\Init$. 
	\item $C' \in \Full$. If so, $\Ssubc(D)$ is moved to $\Full$ and the fundamental components of $(D - \Ssubc(D))$ are moved to $\Init$. 
	\item $C' \in \Init$. In order to keep repairing $C$, we will move $C'$ to $\Act$, and start repairing $D$. Observe that if $C$ and $C'$ have min on opposite sides and $\min(C') = \Omega$, then $D$ is fully repaired and before raising $\Omega$, Step 2(b) needs to be executed.  
\end{enumerate}

\begin{remark}
	\label{rem.legitimate}
Note that if several edges become legitimate simultaneously, they can be handled in arbitrary order. This is true even if the edges are incident at the same component, e.g., suppose two edges, $e_1$ and $e_2$, incident at component $C$ become simultaneously legitimate, with $e_1$ connecting into $\Fro$ and $e_2$ connecting into $\Full$. If so, $C$ will end up in $\Fro$ or $\Full$ depending on which edge is handled first. But in both cases no more changes will be made to the profit allocation of $C$ and therefore the core imputation computed will be the same. 
\end{remark}

At the moment when there are no more components left in $(\Init \cup \Act)$, $\Omega$ stops increasing, the current imputation is output (in Step 3) and the algorithm terminates. In Section \ref{sec.proof-min} we will prove that this imputation is indeed the leximin core imputation.

\subsection{Determining the ``Right'' Sub-Component}
\label{sec.right}

The question of determining the “right” sub-component of a valid component arises at four steps; however they can be classified into three different types of situations, as described below. Example \ref{ex.fully-repaired} illustrates the importance of determining the ``right'' sub-component for the first situation; it is easy to construct similar examples for the other situations as well. 

{\bf  Step 2(b):} 
A valid component $D$ has min on both sides.  Let $C_0$ be the root fundamental component of $D$, see Definition \ref{def.valid-comp}. First assume that $C_0$ has min in $\lleft(C_0)$. There is at least one descendent of $C_0$, say $C_i$, such that $C_i$ has min in $\rright(C_i)$ and $\min(C_0) = \min(C_i)$. Let $S$ be the set of all such descendents of $C_0$; clearly, $C_i \in S$. As described in Definition \ref{def.valid-comp} and proven in Lemma \ref{lem.tree-valid}, $D$ has a tree structure with root $C_0$. The {\bf minimum fully repaired} sub-component of $D$ is defined to be the smallest subtree of $D$ which contains $C_0$ and all components in $S$; we will denote it by $\Ssuba(D)$.  

Next assume that $C_0$ has min in $\rright(C_0)$. If so, in $S$ we will include all descendents, $C_i$ of $C_0$, such that $C_i$ has min in $\lleft(C_i)$ and $\min(C_0) = \min(C_i)$. The rest of the construction is same as above. Finally, if $C_0$ has min on both sides, $S$ will include both types of descendents defined above. 

The correct action to be taken in Step 2(b) is to move $\Ssuba(D)$ to $\Full$ and decompose $(D - \Ssuba(D))$ into fundamental components and move them back to $\Init$. See Example \ref{ex.fully-repaired} for a detailed explanation.

%%%%%%%%%%%%%%%%%%%%%%%%%%%%%%%%%%%%%%%%%%%%%%%%%%%%%%%%%%%

\begin{figure}[h]
\begin{center}
\includegraphics[width=2.2in]{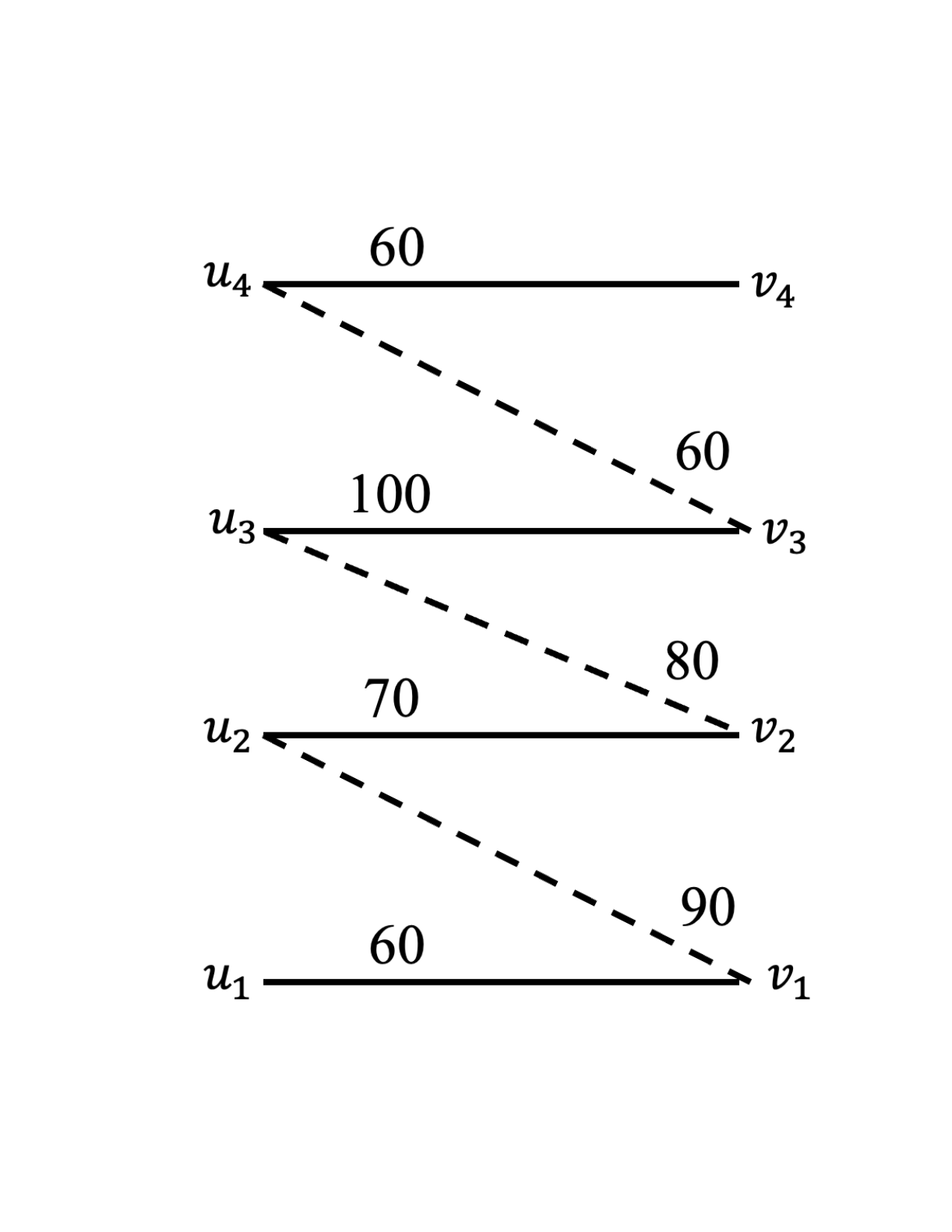}
\caption{The graph for Example \ref{ex.fully-repaired}.}
\label{fig.fully-repaired}
\end{center}
\end{figure}

%%%%%%%%%%%%%%%%%%%%%%%%%%%%%%%%%%%%%%%%%%%%%%%%%%%%%%%%%%%%%%%

\begin{example}
	\label{ex.fully-repaired}
Figure \ref{fig.fully-repaired} shows a valid component $D$ whose root component $C_0 = \{u_1, v_1\}$.  Assume that the profit shares of its essential vertices ($u_1, u_2, u_3, u_4, v_1, v_2, v_3, v_4$) are $(20, 50, 60, 20, 40, 20, 40, 40)$, respectively. Clearly, $D$ has min on both sides at $\Omega = 20$. Moving all of $D$ to $\Full$ will lead to a core imputation which is not leximin, as shown next.  

The minimum fully repaired sub-component of $D$, $\Ssuba(D)$, consists of the graph induced on ($u_1, u_2, v_1, v_2$), and it is added  to $\Full$ in Step 2(b) of the algorithm. Furthermore, the fundamental components of $(D - \Ssuba(D)$, namely ($u_3, v_3$) and ($u_4, v_4$) are moved them back to $\Init$. 

Next, while $\Omega = 20$, ($u_4, v_4$) is moved back to $\Act$ in Step 2(a). Also, since $(u_4, v_3)$ is a legitimate edge, in Step 2(c)(iii), ($u_3, v_3$) is also moved back to $\Act$. The two together form valid component $D' = (u_3, u_4, v_3, v_4)$. At this point, $\Omega$ starts increasing again. When $\Omega = 30$, $D'$ is found to have min on both sides. Furthermore, since $\Ssuba(D') = D'$, it is moved into $\Full$. 

The final core imputation computed for essential vertices ($u_1, u_2, u_3, u_4, v_1, v_2, v_3, v_4$) is given by $(20, 50, 70, 30, 40, 20, 30, 30)$, respectively. This is the leximin core imputation.
\end{example}

{\bf  Step 2(c)(iv):} 
Edge $(i, j)$ connecting valid components $D$ and $D'$, both in $\Act$, has gone tight; clearly, $(i, j)$ must be incident at opposite sides of $D$ and $D'$. Let $R$ be the component obtained by merging $D, D'$ and $(i, j)$ in Step 2(c)(ii). W.l.o.g. assume that $D$ has min in $\lleft(D)$ and $D'$ has min in $\rright(D')$. If so, $i \in \rright(D)$ and $j \in \lleft(D)$. Let $C_0$ and $C_0'$ be the root fundamental components of $D$ and $D'$, respectively and $C_1$ and $C_1'$ be the fundamental components of $D$ and $D'$ containing $i$ and $j$, respectively.  

Let $p$ and $p'$ be the paths from $C_0$ to $C_1$ and from $C_0'$ to $C_1'$, in $D$ and $D'$, respectively, as defined in Definition \ref{def.valid-comp}. Then the sub-component of $R$ consisting of $p$ and $p'$, together with the edge $(i, j)$, is the {\bf minimum fully repaired} sub-component of $R$; we will denote it by $\Ssubb(R)$. In Step 2(c)(iv), the algorithm moves $\Ssubb(R)$ to $\Full$ and decomposes $(R - \Ssubb(R))$ into fundamental components and moves them back to $\Init$.

{\bf  Steps 2(c)(v) and 2(c)(vi):} 
Edge $(i, j)$ connecting components $D$ and $D'$ has gone tight, where $D' \in \Act$, and $D \in \Fro$ in Step 2(c)(v) and $D \in \Full$ in Step 2(c)(vi); observe that $D$ need not be a valid component, as defined in Definition \ref{def.valid-comp}. Let $R$ be the component obtained by merging $D, D'$ and $(i, j)$ in Step 2(c)(ii). Let $C_0$ be the root fundamental component of $D'$  and $C_1$ be the fundamental component of $D'$ containing $i$.

Let $p$ be the path from $C_0$ to $C_1$ in $D'$, as defined in Definition \ref{def.valid-comp}. Then the sub-component of $R$ consisting of $p$ together with the edge $(i, j)$, is the {\bf minimum fully repaired} sub-component of $R$; we will denote it by $\Ssubc(R)$. In Step 2(c)(v), the algorithm moves $\Ssubc(R)$ to $\Fro$ and in Step 2(c)(vi), it moves $\Ssubc(R)$ to $\Full$. In either step, it decomposes $(R - \Ssubc(R))$ into fundamental components and moves them back to $\Init$.

{\bf  The Cleanup in Step 2(d):} 
The fundamental components which have been moved from $\Act$ back to $\Init$ may have subpar tight edges connecting them. All such edges are removed from $T$ in Step 2(d) so these components can be independently  improved in the future.

\begin{remark}
	\label{rem.Omega}
After the execution of Step 2(d) and before raising $\Omega$, it may turn out that for some fundamental component $C$ that just got moved from $\Act$ to $\Init$, $\min(C) = \Omega$. If so, $C$ needs to be moved back to $\Act$.  This may lead to the discovery of a legitimate edge incident at $C$ and the execution of a cascade of such events. All these events need to be executed before raising $\Omega$. 
\end{remark}

\begin{lemma}
	\label{lem.comp-added}
Let $D$ be the component that is moved to $(\Fro \cup \Full)$ at time $\Omega = t$. Then $\min(D) = t$.
\end{lemma}

\begin{proof}
	In all four ways of moving a component $D$ to $(\Fro \cup \Full)$, the root component of $D$, say $C_0$, is a also added. Since $\min(D) = \min(C_0) = t$, the lemma follows.  
\end{proof}

\begin{definition}
	\label{def.epoch}
We will partition a run of Algorithm \ref{alg.min} into {\bf epochs}. The first epoch starts with the start of  Algorithm \ref{alg.min} and an epoch ends when a component gets moved to $(\Fro \cup \Full)$, i.e., in Step 2(b), 2(c)(iv), 2(c)(v) or 2(c)(vi). This will happen when $\Omega = \alpha$, i.e., the max-min profit share, and one of the vertices in this component will have a profit of $\alpha$. Clearly there are at most $O(n)$ epochs. 
\end{definition}

%%%%%%%%%%%%%%%%%%%%%%%%%%%%%%%%%%%%%%%%%%%%%%%%%%%%

\subsection{Extension to Degenerate Games}
\label{sec.char}

Recall that in Section \ref{sec.high}, we made the simplifying assumption that the given game is non-degenerate, i.e., has no viable edges or vertices. In this section we show how to deal with a general game. As before, let $H_0 = (U, V, T_0)$ be the subgraph of $G$ consisting of all essential and viable edges; by Theorem \ref{thm.edges}, these edges are tight under every core imputation. We will partition the connected components of $H_0$ into the two sets defined in Definitions \ref{def.unique-imp} and \ref{def.fundamental}.

 In the rest of this section, we will characterize these two types of components. Let us start by classifying the smallest components. A subpar vertex gets zero profit in all core imputations and is therefore a unique imputation component. By Lemma \ref{lem.unique}, an essential edge is a fundamental component; clearly, its  endpoints are essential vertices.

\begin{lemma}
	\label{lem.unique}
Each essential edge is a fundamental component.
\end{lemma}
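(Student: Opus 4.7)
The plan is twofold. First I would argue that an essential edge $e=(i,j)$ is by itself a connected component of $H_0$. Since $(i,j)$ is essential, both $i$ and $j$ must be matched to each other in every maximum weight matching, so neither endpoint can lie on another essential edge (matchings are vertex-disjoint) nor on any viable edge (a viable edge at $i$ would produce a maximum weight matching pairing $i$ with some $j'\neq j$, contradicting essentiality of $(i,j)$). Since $H_0$ consists of exactly the essential and viable edges, the only $H_0$-edge touching $\{i,j\}$ is $e$ itself, so $\{i,j\}$ together with $e$ forms a component of $H_0$.

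Second, I would show that this component is not a unique imputation component. Suppose for contradiction that every core imputation assigns $u_i=a$ and $v_j=w_{ij}-a$. Applying Theorem \ref{thm.vertices} to the essential vertices $i$ and $j$ forces $a>0$ and $w_{ij}-a>0$. By the first part, every edge other than $e$ incident to $i$ or $j$ is subpar, and Theorem \ref{thm.edges} guarantees that each such subpar edge is over-tight in at least one core imputation. Averaging these witnesses across all subpar edges incident to $j$ and using convexity of the core produces a single core imputation $\pi$ in which every subpar edge at $j$ has strictly positive slack. I would then perturb $\pi$ by $u_i \leftarrow u_i+\epsilon$ and $v_j \leftarrow v_j-\epsilon$ for sufficiently small $\epsilon>0$. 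This preserves the tightness of $e$, preserves the total sum $c(N)$, leaves $u_i \geq 0$ trivially, respects $v_j \geq 0$ since $w_{ij}-a>0$, only loosens each subpar edge at $i$, and reduces each slack at a subpar $(i',j)$ by $\epsilon$, which remains positive. The result is a feasible dual of the right value, hence a core imputation, now with $u_i = a+\epsilon$, contradicting uniqueness.

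The main obstacle is producing a single core imputation with strict slack on \emph{every} subpar edge incident to $j$ simultaneously, and this is precisely what the averaging trick delivers by exploiting convexity of the core. The feasibility bookkeeping after the perturbation is then routine. If $j$ happens to have no subpar edges at all, the argument trivializes, since $v_j$ is then constrained only by $v_j \geq 0$.
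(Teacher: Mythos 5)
Your proof is correct and follows essentially the same route as the paper's: assume the imputation on $(i,j)$ is unique, use Theorem~\ref{thm.edges} plus convexity of the core to obtain a single imputation with strictly positive slack on all relevant subpar edges, and then shift profit along the tight essential edge to contradict uniqueness (you shift from $j$ to $i$ and hence average over the subpar edges at $j$, while the paper shifts from $i$ to $j$ and averages over those at $i$ --- a mirror image). Your first part, verifying that an essential edge is by itself a connected component of $H_0$, is a detail the paper leaves implicit, so including it only strengthens the write-up.
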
 

\begin{proof}
For the sake of contradiction, assume that essential edge $(i, j)$ has a unique imputation and let the profit shares of $i$ and $j$ be $a$ and $b$, respectively. Assume w.l.o.g. that $a \geq b$; clearly, $a > 0$. If $i$ does not have another edge incident at it in $G$, then changing the profit shares of $i$ and $j$ to $(0, a + b)$ will also lead to a core imputation.

Next assume that $i$ has $k$ other edges incident at it. These $k$ edges must necessarily be subpar and by Theorem \ref{thm.edges}, each of these edges, there is a core imputation under which the edge is over-tight, to the extent of at least $\epsilon > 0$, say. Now the convex combination of these core imputations is also a core imputation in which {\bf each} of these edges is over-tight to the extent of $\epsilon/k > 0$. Therefore, changing the profit shares of $i$ and $j$ to $(a - \epsilon/k, b + \epsilon/k)$ will lead to another core imputation. The contradiction establishes the lemma.  
\end{proof}

\begin{lemma}
	\label{lem.left.neq.right}
Let $C$ be a component of $H_0$ consisting of viable edges. If $|\lleft(C)| \neq |\rright(C)|$ then not all vertices of $C$ are essential and $C$ is a unique imputation component. 
\end{lemma}

\begin{proof}
Since $|\lleft(C)| \neq |\rright(C)|$, every maximum weight matching must leave an unmatched vertex in the bigger side. Therefore not all vertices of $C$ are essential.

Next, for contradiction assume that $C$ admits more than one core imputation. Let $v \in C$ get profit shares of $\alpha$ and $\alpha + \beta$, with $\beta > 0$, in two of these imputations, say $p_1$ and $p_2$. To keep all viable edges of $C$ tight, the profit shares of neighbors of $v$ under $p_2$ must be smaller than the corresponding profit shares under $p_1$ by $\beta$. Propagating these constraints alternately to $\lleft(C)$ and $\rright(C)$, we get that in going from $p_1$ to $p_2$, the profit shares of vertices in $\lleft(C)$ increase by $\beta$ and those in $\rright(C)$ decrease by $\beta$. Since $|\lleft(C)| \neq |\rright(C)|$, the total worth shared by the two imputations is different, leading to a contradiction.  
\end{proof}

\begin{example}
	\label{ex.neq}
	 The graph of Figure \ref{fig.viable-unbal} satisfies $|\lleft(C)| \neq |\rright(C)|$ and has the unique imputation of $(100, 0, 0)$ for the vertices $(u_1, v_1, v_2)$. Of these vertices, only $u_1$ is essential. 
\end{example}

%%%%%%%%%%%%%%%%%%%%%%%%%%%%%%%%%%%%%%%%%%%%%%%%%%%%%%%%%%%

\begin{figure}[h]
\begin{center}
\includegraphics[width=2.4in]{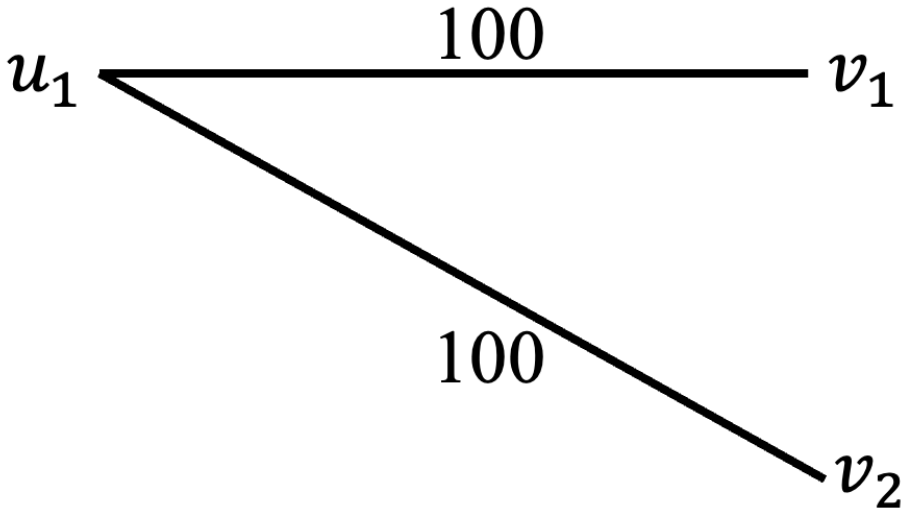}
\caption{A fundamental component $C$ satisfying $|\lleft(C)| \neq |\rright(C)|$ and therefore having a unique imputation.}
\label{fig.viable-unbal}
\end{center}
\end{figure}

%%%%%%%%%%%%%%%%%%%%%%%%%%%%%%%%%%%%%%%%%%%%%%%%%%%%%%%%%%%%%%%

\begin{lemma}
	\label{lem.unique-viable}
Let $C$ be a component of $H_0$ consisting of viable edges with $|\lleft(C)| = |\rright(C)|$. Then $C$ is fundamental if and only if all its vertices are essential.
\end{lemma}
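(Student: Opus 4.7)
My plan is to prove both directions of the iff separately; the forward direction (``fundamental implies all vertices essential'') is best handled by its contrapositive. Suppose some vertex $q$ of $C$ is not essential. Since $q$ is incident to a viable edge of $C$ and viable edges are always tight (Theorem~\ref{thm.edges}), $q$ is matched in some maximum weight matching and so cannot be subpar; hence $q$ is viable. By Theorem~\ref{thm.vertices}, a viable vertex receives profit $0$ in every core imputation. Connectivity of $C$ combined with the tightness relation $u_i+v_j=w_{ij}$ on every edge of $C$ (which holds in every core imputation) then propagates the fixed value $0$ at $q$ along any path in $C$, determining the profit of every other vertex of $C$. This forces a unique imputation on $C$, so $C$ is not fundamental.

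For the hard direction, I would assume all vertices of $C$ are essential and exhibit two core imputations that disagree on $C$. The preparatory step is a \emph{balanced shift} characterization: because every edge of $C$ is always tight, the difference $(\Delta u,\Delta v)$ of any two core imputations satisfies $\Delta u_i+\Delta v_j=0$ on each edge $(i,j)\in C$; by connectivity and bipartiteness of $C$, this forces $\Delta u_i=\delta$ for all $i\in U(C)$ and $\Delta v_j=-\delta$ for all $j\in V(C)$, for a single scalar $\delta$. Thus it suffices to find a core imputation $\hat\pi$ together with a small $\delta>0$ for which the balanced shift of $\hat\pi$ by $\delta$ is also a core imputation.

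I would build $\hat\pi$ by averaging. For each essential $j\in V(C)$, Theorem~\ref{thm.vertices} yields a core imputation $\pi_j$ with $v_j(\pi_j)>0$; and for each subpar edge $e=(i,j)$ of $G$ with $i\notin C$ and $j\in V(C)$, Theorem~\ref{thm.edges} yields a core imputation $\pi_e$ in which $e$ is strictly over-tight. Taking $\hat\pi$ to be the average of these finitely many imputations gives, by convexity of the core, another core imputation that simultaneously satisfies (a) $v_j(\hat\pi)>0$ for every $j\in V(C)$ and (b) $u_i(\hat\pi)+v_j(\hat\pi)>w_{ij}$ for every subpar edge $(i,j)$ with $i\notin C$, $j\in V(C)$. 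I would then apply a balanced shift by $\delta>0$ to $\hat\pi$ and verify feasibility: tight edges inside $C$ remain tight, subpar edges with both endpoints in $C$ are unchanged, non-negativity on $U(C)$ is automatic, non-negativity on $V(C)$ holds by (a), subpar edges from $U(C)$ to the outside only gain slack, and the remaining subpar edges from $U\setminus C$ into $V(C)$ stay feasible by (b). For $\delta>0$ small enough all these checks pass, producing a second core imputation and showing $C$ is fundamental.

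The main obstacle I anticipate is the balanced-shift characterization: although linear-algebraically routine, it is the hinge that lets Theorems~\ref{thm.vertices} and~\ref{thm.edges} do their work, and must be stated precisely enough to rule out any exotic ``direction of motion'' in the core polytope when restricted to $C$. A secondary technical concern is bookkeeping: I must ensure the averaged $\hat\pi$ simultaneously achieves strict positivity at every $v_j$ with $j\in V(C)$ and strict over-tightness at every relevant subpar edge crossing into $V(C)$, which forces me to average in a carefully chosen finite family of imputations rather than a single one.
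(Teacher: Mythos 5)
Your proof is correct and follows essentially the same route as the paper: the forward direction propagates the forced value $0$ at a viable vertex along tight edges of the connected component, and the backward direction averages finitely many core imputations (to get strict positivity on one side and strict over-tightness on the crossing subpar edges) and then perturbs by a small balanced shift. The only quibble is that $q$ fails to be subpar simply because a viable edge lies in some maximum weight matching, not because viable edges are tight; this does not affect the argument.
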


\begin{proof}	
First assume that $C$ contains a viable vertex, say $u$. Clearly, all viable vertices of $C$ must get zero profit in all core imputations. We will show that the profit of each essential vertex is forced to be unique, thereby proving that $C$ is a unique imputation component. 

Consider an essential vertex $v$ of $C$. Let $p$ be a path from $u$ to $v$, consisting of vertices $u = v_0, v_1, v_2, \ldots , v_k = v$ and let the $k$ edges on $p$ have positive weights $w_1, w_2, \ldots , w_k$, respectively. We will argue that each vertex on $p$ must have a unique profit share; let us denote the profit share of $v_i$ by $f_i$. Since $v_0$ is viable, $f_0 = 0$. Since $(v_0, v_1)$ is tight, $f_1 = w_1 > 0$. Since edge $(v_1, v_2)$ is not allowed to be over-tight, $w_2 \geq f_1$. If $w_2 = f_1$, then $v_2$ must have a profit share of 0, and if $w_2 > f_1$, then $f_2 = w_2 - f_1 > 0$. In this manner, constraints can be propagated down $p$, eventually assigning a unique profit share to $v$. Hence $C$ is a unique imputation  component.

Next assume that all vertices of $C$ are essential and for the sake of contradiction assume that it has a unique core imputation. By Theorem \ref{thm.vertices}, each vertex of $C$ must get a positive profit. Moreover, as in the proof of Lemma \ref{lem.unique}, we can assume that all subpar edges incident at $C$ are over-tight by at least $\epsilon > 0$. Therefore, on increasing (decreasing) the profit shares of all vertices in $\lleft(C)$ ($\rright(C)$) by $\epsilon$, all edges of $C$ will remain tight and we get another core imputation, leading to a contradiction. Hence $C$ is a fundamental component.
\end{proof}

Theorem \ref{thm.classification} follows from Lemmas \ref{lem.unique}, \ref{lem.left.neq.right} and \ref{lem.unique-viable} and is used by the algorithm for partitioning the components of $H_0$ into fundamental and unique-imputation components. 

\begin{theorem}
		\label{thm.classification}
A component of $H_0$ is fundamental if and only if all its vertices are essential.
\end{theorem}

\begin{example}
	\label{ex.viable-bal}
Figure \ref{fig.viable-bal} shows a component $C$ satisfying $|\lleft(C)| = |\rright(C)|$ and having a unique core imputation; it is $(60, 0, 0, 40)$ for vertices $(u_1, u_2, v_1, v_2)$. Observe that $u_2$ and $v_1$ are not essential. Next, assume that $C$ is a complete bipartite graph with unit weight edges. Then $|\lleft(C)| = |\rright(C)|$ and all vertices of $C$ are essential. Clearly $C$ a fundamental component. 
\end{example}

%%%%%%%%%%%%%%%%%%%%%%%%%%%%%%%%%%%%%%%%%%%%%%%%%%%%%%%%%%%

\begin{figure}[h]
\begin{center}
\includegraphics[width=2.4in]{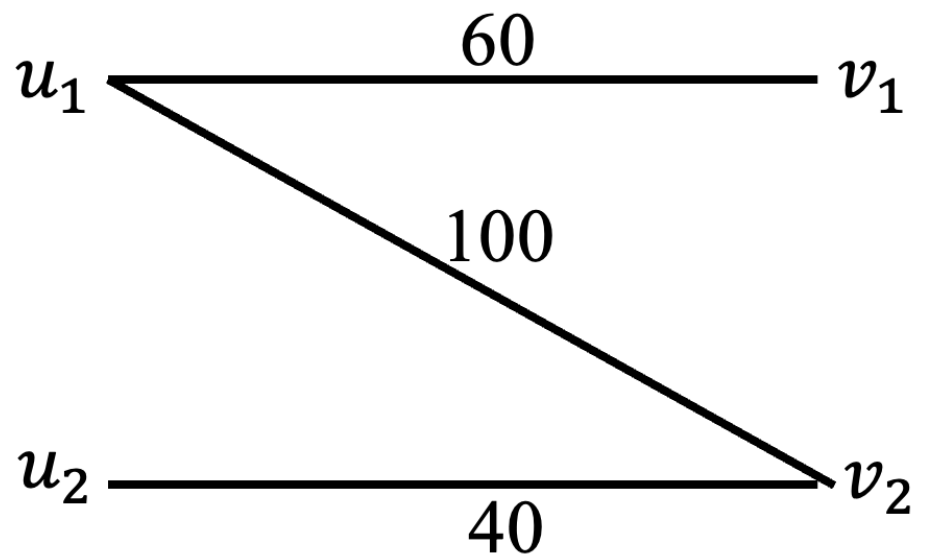}
\caption{The graph for Example \ref{ex.viable-bal}}.  
\label{fig.viable-bal}
\end{center}
\end{figure}

%%%%%%%%%%%%%%%%%%%%%%%%%%%%%%%%%%%%%%%%%%%%%%%%%%%%%%%%%%%%%%%

%%%%%%%%%%%%%%%%%%%%%%%%%%%%%%%%%%%%%%%%%%%%%%%%

\section{Proof of Correctness of Algorithm \ref{alg.min}}
\label{sec.proof-min}

Our algorithm maintains the following second invariant:

{\bf Invariant 2:}   At any time $\Omega$ in the run of the algorithm, each fundamental component $C \in  (\Act \cup \Init)$ satisfies $\min(C) \geq \Omega$.

\begin{lemma}
	\label{lem.invariant} 
The algorithm maintains the Invariant 2. 
\end{lemma}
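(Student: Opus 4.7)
The plan is to prove the invariant by induction on the events that the mechanism processes, of which there are five types: (I) Initialization; (II) the continuous raising of $\Omega$ with simultaneous repair of all components in $\Act$; (III) Step 2(a), moving a component from $\Init$ to $\Act$; (IV) Step 2(b), declaring a valid component fully repaired; and (V) Step 2(c), handling a legitimate subpar edge. The base case is immediate: after Initialization, $\Omega$ equals the minimum profit share of any essential vertex under $(u,v)$, so every fundamental component in $\Init$ satisfies $\min(C)\ge\Omega$, while $\Act=\emptyset$.

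The heart of the argument is a synchronization sub-lemma I would establish first: for every $C\in\Act$, one has $\min(C)=\Omega$ continuously throughout $C$'s tenure in $\Act$. The entry condition gives $\min(C)=\Omega$ at the instant $C$ is placed into $\Act$ (by the triggering condition of Step 2(a), or, for merges in Step 2(c)(iii), because the $\Act$-side has $\min=\Omega$ by induction and the $\Init$-side has $\min\ge\Omega$). While $C$ sits in $\Act$, the mechanism rotates $C$ at unit rate in the direction that raises the side containing $\min(C)$; since $C$ never has min on both sides simultaneously (else Step 2(b) would already have triggered), the vertex attaining $\min(C)$ lies on the rising side, so $\min(C)$ rises at unit rate, matching $\Omega$. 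This takes care of continuous event (II) for components in $\Act$; for $C\in\Init$ the profits do not change, and whenever $\Omega$ catches up to $\min(C)$, Step 2(a) fires and moves $C$ to $\Act$, so $\Omega$ cannot strictly exceed $\min(C)$ while $C\in\Init$.

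For the discrete events (III)--(V) the analysis is uniform and structural: each such event creates or re-classifies labeled sub-components whose essential vertices form a subset of essential vertices of some pre-existing component $D$ with $\min(D)\ge\Omega$ (in fact $\min(D)=\Omega$ whenever $D$ was in $\Act$, by the synchronization sub-lemma). For (III), $\min(C)=\Omega$ by the trigger condition. For (IV), every fundamental component $C'$ of $D-\Ssuba(D)$ moved to $\Init$ has essential vertices contained in $D$, so $\min(C')\ge\min(D)=\Omega$. Each sub-case of (V) is analogous: in 2(c)(iii) the merged $D$ has $\min(D)=\min\{\min(C),\min(C')\}=\Omega$ and so enters $\Act$ satisfying the invariant; in 2(c)(iv)--(vi), the fundamental components of $D-\Ssubb(D)$ or $D-\Ssubc(D)$ moved back to $\Init$ inherit $\min\ge\min(D)\ge\Omega$ for the same reason. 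The cascading behavior described in Remark \ref{rem.Omega} is handled at a fixed value of $\Omega$, and each successive re-trigger of Step 2(a) on a cascaded $C$ with $\min(C)=\Omega$ preserves the invariant.

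I expect the main obstacle to be the synchronization sub-lemma in (II): one must verify that during simultaneous rotation of multiple valid components in $\Act$, each is rotated consistently in the direction dictated by its own root $C_0$ (Definition \ref{def.valid-comp}), without interference across components, and that the unit-rate rotation really does lift the min side of every valid component at unit rate. The tree structure of valid components together with the alternation of sides along legitimate subpar edges (items 3--4 of Definition \ref{def.valid-comp}) is precisely what guarantees this: once the root's rotation direction is fixed, the sign of the rate of change on each fundamental component of $D$ is determined, and $\min(D)$ is attained on the rising side of $D$. With this sub-lemma in hand, the remainder of the induction is bookkeeping over the six cases of Step 2.
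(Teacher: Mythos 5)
Your proposal is correct and follows essentially the same route as the paper: a base case at initialization, the observation that synchronization between the raising of $\Omega$ and the unit-rate repair keeps $\min(C)=\Omega$ for components in $\Act$, and a case check that every fundamental component returned to $\Init$ by Steps 2(b) and 2(c) inherits $\min \geq \Omega$ from its parent component. Your ``synchronization sub-lemma'' is precisely the content the paper factors out into Lemma \ref{lem.min-valid} (together with Lemma \ref{lem.valid} for the structural claim about rotation directions along the tree of a valid component), so no further changes are needed.
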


\begin{proof}
In Step 1(e), $\Omega$ is initialized to the minimum profit share of an essential vertex under the starting core imputation $(u, v)$, and at this time, each fundamental component $C$ with $\min(C) = \Omega$ is moved from $\Init$ to $\Act$. Therefore Invariant 2 is satisfied at the beginning of the algorithm. 

At each time $\Omega$, each fundamental component $C \in \Init$ satisfying $\min(C) = \Omega$ is moved from $\Init$ to $\Act$. Thereafter, the increase in $\Omega$ and $\min(C)$ is synchronized. Therefore Invariant 2 is maintained.

Assume valid component $D \in \Act$ and at time $\Omega$, $D$ has min on both sides, i.e., $D$ is fully repaired. Now, an appropriate sub-component of $D$ is moved from $\Act$ to $\Full$ in Step 2(c)(vi). The remaining fundamental components, say $C$, of $D$ are moved back to $\Init$. Since the latter satisfy $\min(C) \geq \Omega$, the fundamental components in $\Init$ still satisfy Invariant 2.

Another possibility is that at time $\Omega$ in Step 2(c), a subpar edge $e$ from valid component $D \in \Act$ to a component in $(\Act \cup \Fro \cup \Full)$ becomes exactly tight. If so, the appropriate step in Step 2(c) is executed. This may also involve moving some fundamental components, say $C$, back to $\Init$. Since the latter satisfy $\min(C) \geq \Omega$, the fundamental components in $\Init$ still satisfy Invariant 2.
\end{proof}

\begin{lemma}
	\label{lem.tree-valid}
At any time $\Omega$ in the run of the algorithm, each valid component $D \in \Act$ has the tree structure described in Definition \ref{def.valid-comp}. 
\end{lemma}

\begin{proof}
	The proof follows by an induction on the growth of $D$, i.e., the number of fundamental components in it. $D$ starts off when a fundamental component $C_0$ moves from $\Init$ to $\Act$. Assume $C_0$ has min in $\lleft(C_0)$, the other case is analogous. If so, $D$ will be rotated clockwise. Clearly, no subpar over-tight edge with both endpoints in $D$ can go tight due to this rotation. However, a subpar over-tight edge can go tight from $D$ to a fundamental component $C \in \Init$ or a valid component $D' \in \Act$. If so, the edge must go from $\rright(D)$ to $\lleft(C)$ or $\lleft(D')$. In the first case, $C$ is added to $D$ and the tree structure is maintained. In the second case, it must be the case that $D'$ is being rotated anti-clockwise, i.e., it has min in $\rright(D')$ and hence Step 2(c)(iv) kicks in and $D$ ceases to exist. The last possibility is that a subpar over-tight edge from $D$ to a component $D' \in (\Fro \cup \Full)$ goes tight. If so, Step 2(c)(v) or 2(c)(vi) kicks in and again $D$ ceases to exist.	
\end{proof}

Let $S$ be a set of unique imputation components and fundamental components of $G$ and let $W_S$ denote the set of vertices of all these components, $W_S \subseteq (U \cup V)$. Let $H(W_S)$ denote the subgraph of $G$ induced on $W_S$. The edges of $H(W_S)$ inherit weights from $G$, i.e., $w: E \rightarrow \cQ_+$. 

\begin{lemma}
	\label{lem.same}
	The following hold for the edge-weighted graph $H(W_S)$: 
\begin{enumerate}
	\item The projection of a maximum weight matching of $G$ onto $H(W_S)$ is a maximum weight matching in $H(W_S)$. Furthermore every maximum weight matching of $H(W_S)$ arises in this manner. 
	\item The projection, onto $H(W_S)$, of the partition of vertices and edges of $G$ into essential, viable and subpar is the correct partition for $H(W_S)$.   
\item The projection of a core imputation of $G$ onto $H(W_S)$ is a core imputation for $H(W_S)$. 
\end{enumerate}
\end{lemma}

\begin{proof}
\begin{enumerate}
	\item The crux of the matter lies in the following fact, which also gives the importance of moving entire  unique imputation and fundamental components of $G$ into the subgraph $H(W_S)$. Obtain $E'$ from $E$ by deleting all subpar edges and let $G' = (U, V, E')$. Clearly the sets of maximum weight matchings of $G$ and $G'$ are identical. Therefore each maximum weight matching in $G$ consists of a union of maximum weight matchings in the unique imputation and fundamental components of $G$. 
	
	Obviously the projection of a maximum weight matching of $G$ onto $H(W_S)$ is a matching in $H(W_S)$. Suppose for contradiction $H(W_S)$ has a heavier maximum weight matching, say $M$. Then $M$ together with maximum weight matchings in the fundamental components of $G$ which are not in $S$ will be a heavier matching in $G$, leading to a contradiction. 
 
	\item This is a corollary of the previous assertion. 
	
	\item Let $p$ be a core imputation of $G$ and let $w$ and $w'$ be the weights of maximum weight matchings in $G$ and $H(W_S)$, respectively. By definition, $p(W_S) \geq w'$. By the first part, the maximum weight matching in the induced subgraph on $((U \cup V) - W_S)$ has weight $w - w'$. Therefore $p((U \cup V) - W_S) \geq (w - w')$. Also, the worth of the game $G$ is 
	$$ w = p(W_S) + p((U \cup V) - W_S) \geq w' + (w - w') = w .$$
	Therefore $p(W_S) = w'$. 
	
	Finally observe that for any coalition $W' \subseteq W_S$, its worth is the same in $G$ and $H(W_S)$, and by definition, $p(W')$ is at least the worth of $W'$. Hence the projection of $p$ onto $H(W_S)$ is a core imputation for $H(W_S)$. 
\end{enumerate}
\end{proof}

At any time $\Omega = t$ in the run of the algorithm, consider the moment when $\Omega$ is just about to be increased, i.e, the cascade of events described in Remark \ref{rem.Omega} have all been executed. At this moment, let $W_t$ be the set of vertices of all components which are in $(\Fro \cup \Full)$; clearly $W_t \subseteq (U \cup V)$. Let $H_t$ denote the subgraph of $G$ induced on $W_t$.

\begin{lemma}
	\label{lem.H_t}
The profit shares of vertices of $W_t$ form the leximin core imputation for the graph $H_t$. 
\end{lemma}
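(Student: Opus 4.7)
The plan is to prove Lemma \ref{lem.H_t} by induction on the sequence of events that grow $V_t$: the initialization in Step 1(c) and the subsequent moves of sub-components into $\Fro \cup \Full$ in Steps 2(b), 2(c)(iv), 2(c)(v), 2(c)(vi). At each such event I will verify two things: (a) the mechanism's current allocation $p$ restricted to $V_{t^+}$ is a core imputation of $H_{t^+}$, and (b) among all core imputations of $H_{t^+}$ it is the leximin one (unique by the argument following Lemma \ref{lem.leximin-unique}).

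For the base case, immediately after Step 1 the set $V_t$ is the union of the unique imputation components, so by Definition \ref{def.unique-imp} and Lemma \ref{lem.same} every core imputation of $H_t$ assigns these vertices exactly the profits the mechanism has, and the statement is trivial. For the inductive step, let $W$ denote the newly added vertex set, so $V_{t^+} = V_t \cup W$ with $V_t \cap W = \emptyset$. The key structural observation, used throughout, is that every edge of $G$ between $V_t$ and $W$ is subpar: essential and viable edges lie inside a single fundamental component of $H_0$, fundamental components never get split during the run, and $V_t$ and $W$ are disjoint unions of fundamental components. By Lemma \ref{lem.same} these edges remain subpar in $H_{t^+}$, so no maximum weight matching of $H_{t^+}$ uses them, yielding $c(H_{t^+}) = c(H_t) + c(H_{t^+}[W])$. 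Combined with the complementary slackness conditions that the mechanism maintains throughout its run (every essential and viable edge is tight; non-essential vertices get zero profit), Theorem \ref{thm.SS} delivers (a); moreover, the worth decomposition forces the restrictions $y|_{V_t}$ and $y|_W$ of any core imputation $y$ of $H_{t^+}$ to be core imputations of $H_t$ and $H_{t^+}[W]$, respectively.

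For (b), suppose for contradiction that some core imputation $y$ of $H_{t^+}$ were lexicographically larger than $p|_{V_{t^+}}$ in the sorted order on essential vertices. By the inductive hypothesis applied to $y|_{V_t}$, the first position at which the two sorted lists differ cannot be supplied by an essential vertex in $V_t$; hence it must involve essential vertices of $W$. But by Lemma \ref{lem.min-valid} the minimum essential profit inside $W$ under $p$ equals $\Omega = t^+$, and by the choice of the minimum fully-repaired sub-component $\Ssuba$, $\Ssubb$, or $\Ssubc$ in Section \ref{sec.right}, together with Lemma \ref{lem.valid}, this minimum is attained on both the $\lleft$ and $\rright$ sides of the bipartition within the newly-frozen structure. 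Any attempt to raise one such minimum-profit vertex under $y$ must propagate, via alternating $\pm$ shifts along the chain of tight essential, viable, and legitimate subpar edges (exactly as in the proof of Lemma \ref{lem.unique-viable}), to the opposite side, where an equal-valued vertex is forced to drop below $\Omega$; this produces a sorted list lexicographically smaller than $p|_{V_{t^+}}$, a contradiction. Hence $p|_{V_{t^+}}$ is leximin for $H_{t^+}$.

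The main obstacle I anticipate is the clean formalization of the alternating-propagation argument in (b), especially in cases 2(c)(v) and 2(c)(vi), where $W$ abuts a previously frozen or fully repaired component $C'$ whose profits are already fixed; there one must verify that the tight subpar edge $(i,j)$ connecting $C'$ into the new structure blocks the only remaining direction of rotation, i.e., that no adjustment of the newly-added fundamental components is possible without either making $(i,j)$ or some essential/viable edge under-tight or depressing another minimum-profit essential vertex below $\Omega$. A case split matching the four event types, guided by the tree structure of valid components from Definition \ref{def.valid-comp} and by the description of $\Ssuba$, $\Ssubb$, $\Ssubc$ in Section \ref{sec.right}, should close the argument.
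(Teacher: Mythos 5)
Your proof is correct and follows essentially the same route as the paper's: the heart of both arguments is that any attempt to raise a minimum-profit vertex of the newly frozen sub-component propagates along the chain of tight essential/viable edges and legitimate subpar edges (which must not go under-tight) until it forces an equal-minimum vertex on the opposite side of the bipartition to drop, so the chosen sub-component $\Ssuba$, $\Ssubb$ or $\Ssubc$ cannot be leximin-improved. The explicit induction on events and the worth decomposition across the (subpar) edges separating $V_t$ from the newly added vertices are useful scaffolding that the paper instead delegates to Lemma \ref{lem.same} and Corollary \ref{cor.H_t}, but they do not change the substance of the argument.
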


\begin{proof}
By the third statement of Lemma \ref{lem.same}, the profit shares of vertices of $W_t$ form a core imputation for the graph $H_t$. Let $F$ denote the set of vertices of the unique imputation components in $\Fro$ at the start of the algorithm. The proof follows by an induction on the growth of $W_t$, i.e., the number of components (which in general are  unions of fundamental components and legitimate edges) that are added to $(\Fro \cup \Full)$. Clearly the assertion holds for $F$, which proves the basis of the induction.
	
For the induction step, assume that the assertion holds for $W_t$ and we wish to prove it for $W_{t'}$ where $t' > t$ and at time $t'$, one more component, say $D$, is added to $(\Fro \cup \Full)$ and hence to $W_{t'}$. Clearly, $\min(D) = t'$.  By Lemma \ref{lem.comp-added} each component $D'$ that was added to $(\Fro \cup \Full)$ until time $t$ satisfied $\min(D') \leq t$. 

There are two cases. If $D$ was added to $\Full$ because it was fully repaired, then its imputation cannot be leximin-improved and the assertion follows. If $D$ was added to $(\Fro \cup \Full)$ because a subpar edge $e$ just went tight, then the only way of leximin-improving $D$ is to leximin-deteriorate the current imputation of $W_t$ (since edge $e$ cannot be allowed to go under-tight). For each component $D'$ in $W_t$, $\min(D') <  \min(D)$. Therefore, this change will leximin-deteriorate the overall imputation for $W_{t'}$. Hence the current imputation of $W_{t'}$ is the leximin core imputation for the graph $H_{t'}$. The lemma follows. 
\end{proof}

\begin{theorem}
	\label{thm.progress}
Let $M$ denote the weight of a maximum weight edge in the given instance. Then, Algorithm \ref{alg.min} will terminate with a leximin core imputation by time $M/2$. 
\end{theorem}

\begin{proof}
By Lemmas \ref{lem.H_t}, and \ref{lem.invariant} at each time $t$, the profit shares of vertices of $W_t$ form the leximin core imputation for the graph $H_t$ and the profit share of each vertex in $(U \cup V) - W_t$ is $> t$.

Consider the extreme case that there is an essential edge $e$ having weight $M$ which starts off with profit shares of $0$ and $M$ on its endpoints. If $e$ does not get moved into $(\Fro \cup \Full)$ before time $M/2$, then it will be fully-repaired at time $M/2$. Clearly, all other fundamental components will get moved into $(\Fro \cup \Full)$ by that time either because they are fully-repaired or have a tight edge to a component in $(\Fro \cup \Full)$. The theorem follows.
\end{proof}

\begin{theorem}
	\label{thm.time}
	Algorithm \ref{alg.min} runs in strongly polynomial time; in particular, in time $O(mn^3)$. 
\end{theorem}

\begin{proof}
By Lemma \ref{lem.classify}, the time required to classify vertices and edges is $O(mn^3)$, see also Section \ref{sec.Classify}. Below we will show that the rest of the steps of Algorithm \ref{alg.min} require time $O(m n \log n)$, using the data structure of Fibonacci Heap \cite{Fibonacci.Heap}, which executes operations of insert and find-min in $O(1)$ time and delete-min and delete in $O(log n)$ time, amortized. Since the first term dominates the second, the theorem follows. 

There will be at most $m + n$ entries in the Fibonacci Heap, one corresponding to each edge and each fundamental component. The key of each entry specifies the time at which it will participate in an event. For each fundamental component $C \in \Init$, there is an entry whose key is $\min(C)$, since $C$ will move into $\Act$ when $\Omega = \min(C)$. Once a fundamental component $C$ moves into $\Act$, in Step 2(a) or 2(c)(iii), all subpar edges incident at $C$ are inserted in the Fibonacci Heap with the appropriate key. 

The minimum key in the Fibonacci Heap indicates the next event that will happen. Hence there is no need to increase $\Omega$ continuously; it can be increased in discrete steps. When Step 2(d) is executed, all the edges participating in the cleanup are deleted from the Fibonacci Heap. These $O(m)$ operations, which take $O(mn \log n)$ time, dominate the rest of the operations. 
\end{proof}

\bigskip

%\setcounter{figure}{1} 
%\begin{figure}[H]

\begin{figure}

	\begin{wbox}
		\begin{alg}
		\label{alg.max}
		{\bf (Algorithm for Leximax Core Imputation)}\\

\begin{enumerate}
	\item {\bf Initialization:} 
	\begin{enumerate}
		\item $(u, v) \leftarrow$ an arbitrary core imputation. 		
		\item $T_0 \leftarrow $ essential and viable edges; $H_0 = (U, V, T_0)$. 
		\item $\Fro \leftarrow $ unique imputation components of $H_0$.
		\item $\Init \leftarrow $ fundamental components of $H_0$. 
		\item $\Omega \leftarrow$ maximum profit share of an essential vertex under $(u, v)$.  
		\item $T \leftarrow T_0$. 
	\end{enumerate}

\bigskip
		
	\item {\bf While} $(\Init \cup \Act) \neq \emptyset$  {\bf do}: \\
		   At unit rate, decrease $\Omega$ and repair all components $C \in \Act$.  \\
           \hspace*{2mm}  {\bf If}: 
		\begin{enumerate}
		\item $\exists C \in \Init \ s.t. \ \min(C) = \Omega$ {\bf then} move $C$ to $\Act$
		\item $\exists C \in \Act$ is fully repaired {\bf then} move $\Xsuba(C)$ to $\Full$ \\
		and the fundamental components of $(C - \Xsuba(C))$ to $\Init$. 
		\item $\exists$ legitimate edge $(i, j)$. \\
		Assume its endpoints are in $C$ and $C'$, where $C \in \Act$.  
		
			 \begin{enumerate} 
			 \item  $T \leftarrow T \cup \{(i, j)\}$. 
			 \item Merge $C, C'$ and $(i, j)$ into one component, say $D$. 
			 \item {\bf If} $C' \in \Init$ {\bf then} move $D$ to $\Act$.   
			 \item {\bf If} $C' \in \Act$ {\bf then} move $\Xsubb(D)$ to $\Full$\\ and the fundamental components of $(D - \Xsubb(D))$ to $\Init$.
			 \item {\bf If} $C' \in \Fro$ {\bf then} move $\Xsubc(D)$ to $\Fro$ \\ and the fundamental components of $(D - \Xsubc(D))$ to $\Init$. 
			 \item {\bf If} $C' \in \Full$ {\bf then} move $\Xsubc(D)$ to $\Full$ \\ and the fundamental components of $(D - \Xsubc(D))$ to $\Init$.
				\end{enumerate} 

		\item Cleanup $T$: remove all tight subpar edges\\  \hspace*{1.88cm} connecting fundamental components in $\Init$.

			\end{enumerate} 
		\item {\bf Output:} Output the current imputation and HALT.  
	\end{enumerate} 
	% \bigskip
			\end{alg}
	\end{wbox}
\end{figure}

\bigskip

\section{Algorithm for Leximax Core Imputation}
\label{sec.leximax}

Algorithm \ref{alg.max} computes the leximax core imputation for the assignment game, $G = (U, V, E), \ w: E \rightarrow \cQ_+$. It differs from Algorithm \ref{alg.min} in the following ways:

\begin{enumerate}
	\item In Step 1(e), $\Omega$ is initialized to the maximum profit share of an essential vertex under the initial core imputation $(u, v)$ computed in Step 1(a).   
	\item Whereas in Algorithm \ref{alg.min}, $\Omega$ is raised at unit rate, in Algorithm \ref{alg.max}, it is lowered at unit rate. 
	\item Definition \ref{def.max-both} plays a role analogous to that of Definition \ref{def.min-both}. Furthermore, for Algorithm \ref{alg.max}, we will say that component $C$ is {\bf fully repaired} if it has max on both sides. 
	\item Repairing a Component:	If component $C$ has max in $\lleft(C)$, then $C$ needs to be rotated anti-clockwise and if it has max in $\rright(C)$, it needs to be rotated clockwise. 
	\item The previous change will change the definition of valid component appropriately. Furthermore, the definitions of $\Ssuba, \Ssubb$ and $\Ssubc$ will change appropriately and we will denote the resulting sub-components by $\Xsuba, \Xsubb$ and $\Xsubc$, respectively. 
	\item The run of Algorithm \ref{alg.max} is partitioned into epochs via a definition analogous to Defintion \ref{def.epoch}. 
	\item The proofs of correctness and running time of Algorithm \ref{alg.max} are analogous to that of  Algorithm \ref{alg.min}. 
\end{enumerate}

\begin{definition}
\label{def.max-both}
	At any point in the Algorithm, for any connected component $C$ of $H$, $\max(C)$ will denote the {\bf largest profit share of an essential vertex in $C$}. We will say that {\bf $C$ has max on both sides} if there is a vertex in $\lleft(C)$ as well one in $\rright(C)$ whose profit is $\max(C)$. Otherwise, we will say that {\bf $C$ has max in $\lleft(C)$ $(\rright(C))$} if there is a vertex in $\lleft(C)$ ($\rright(C)$) whose profit is $\max(C)$. 
\end{definition}

\begin{theorem}
	\label{thm.max-poly-time}
Algorithm \ref{alg.max} computes the leximax core imputation in strongly polynomial time; in particular, in time  $O(mn^3)$. 

\end{theorem}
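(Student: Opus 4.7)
The plan is to mirror, step for step, the correctness and complexity analysis of Mechanism \ref{alg.min}, with the symmetry ``smallest profit $\leftrightarrow$ largest profit'' and ``$\Omega$ rising $\leftrightarrow$ $\Omega$ falling.'' Concretely, I would first restate the Invariant in the dual form: at every time $\Omega$ in the run of Mechanism \ref{alg.max}, each fundamental component $C \in \Init \cup \Act$ satisfies $\max(C) \leq \Omega$. The proof is the analog of Lemma \ref{lem.invariant}. Initialization sets $\Omega$ to the maximum profit of an essential vertex under $(u,v)$; a component is moved from $\Init$ to $\Act$ exactly when $\max(C) = \Omega$; while in $\Act$, lowering $\Omega$ at unit rate is synchronized with the rotation (anti-clockwise if $C$ has max in $\lleft(C)$, clockwise if $C$ has max in $\rright(C)$), so $\max(C)=\Omega$ is maintained. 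Crucially, any rotation preserves tightness of essential and viable edges, so dual feasibility of the current imputation is preserved throughout.

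Second, I would adapt the notion of \emph{legitimate subpar edge} to the leximax setting: a tight subpar edge $(i,j)$ with $i \in C \in \Act$ is legitimate iff $C$ has max in $\lleft(C)$ and $i \in \lleft(C)$, or max in $\rright(C)$ and $i \in \rright(C)$. Equivalently, legitimacy captures precisely the situation where the endpoint incident at $C$ sits on the ``decreasing'' side, so the edge is in danger of going under-tight if the rotation continues. With this, the analog of Lemma \ref{lem.valid} goes through: a legitimate edge cannot close a cycle inside a single valid component; when it connects two active valid components, their root fundamental components must have max on opposite sides, and the merged component has max on both sides, triggering Step 2(c)(iv).

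Third, and this is where the work lies, I would redo the ``right sub-component'' analysis of Section \ref{sec.right} with $\Xsuba, \Xsubb, \Xsubc$ playing the roles of $\Ssuba, \Ssubb, \Ssubc$. For a valid component $D$ with max on both sides, $\Xsuba(D)$ is built from the root fundamental component $C_0$, all descendants $C_i$ achieving $\max(C_i)=\max(D)$ on the side opposite $C_0$'s max side, and all intermediate components, together with their connecting legitimate edges. The analog of Lemma \ref{lem.H_t} then says that the profit shares assigned to $V_t$ by Mechanism \ref{alg.max} form the leximax core imputation of $H_t$, because any attempt to further decrease $\max$ inside $\Xsuba(D)$ by rotating an intermediate component forces, via the no-under-tight constraint on the connecting legitimate edges, a rotation of a whole chain of fundamental components, which raises $\max$ at either $C_0$ or at the opposite-side witness—strictly worsening the leximax objective since $\max(D) > \max(C)$ for every intermediate $C$. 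The main obstacle, and the step deserving the most care, is this leximax-dominance argument; the rest is bookkeeping that transposes directly from Mechanism \ref{alg.min}.

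Finally, I would invoke the analogs of Theorem \ref{thm.progress}, Corollary \ref{cor.terminate}, Lemma \ref{lem.half}, and Lemma \ref{lem.epochs}: defining $b_1 > b_2 > \cdots > b_k$ as the distinct essential profit shares in the leximax imputation and $W_\ell$ as the set of vertices receiving $b_\ell$, one shows by induction on $\ell$ that at time $\Omega = b_\ell - \epsilon$ all vertices of $W_1 \cup \cdots \cup W_\ell$ lie in $\Fro \cup \Full$ with their leximax shares. Half-integrality and the same epoch argument (each epoch moves at least two vertices into $\Fro \cup \Full$, and between epochs at most $n$ fundamental components shuttle between $\Init$ and $\Act$) give $O(n^2)$ steps, each costing at most $O(m)$ time to discretize the next event, yielding the claimed polynomial running time and establishing Theorem \ref{thm.max-poly-time}.
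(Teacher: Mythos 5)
Your proposal does exactly what the paper does: the paper offers no separate proof of Theorem \ref{thm.max-poly-time} beyond the assertion that the analysis of Mechanism \ref{alg.min} carries over under the substitutions $\min \leftrightarrow \max$, raising $\leftrightarrow$ lowering $\Omega$, and the corresponding redefinitions of legitimacy and of the fully-repaired sub-components; your dualized Invariant, legitimacy rule, and $\Xsuba$ construction are the intended instantiations of that claim. However, the blanket appeal to symmetry hides one point where the two settings are \emph{not} symmetric, namely the constraints $u_i \geq 0$, $v_j \geq 0$ of LP (\ref{eq.core-dual-bipartite}). In Mechanism \ref{alg.min} these are respected for free: every essential vertex on the decreasing side of a rotation has profit strictly greater than $\Omega$, the rotation stops the instant one of them reaches $\Omega$ (min on both sides), and $\Omega \geq 0$ throughout, so no dual variable can go negative. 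In Mechanism \ref{alg.max} the decreasing side is the side carrying the maximum, and a valid component assembled from several fundamental components can contain on that side essential vertices whose profit is far below $\Omega$ --- including zero. Nothing in your event list (nor in the paper's) stops the rotation when such a vertex reaches $0$.

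Concretely, take essential edges $(u_1,v_1)$ of weight $10$ and $(u_2,v_2)$ of weight $4$, plus a subpar edge $(u_1,v_2)$ of weight $12$. The core forces $u_1 \geq 8$ and the leximax imputation is $(u_1,v_1,u_2,v_2)=(8,2,0,4)$. Starting Mechanism \ref{alg.max} from the core imputation $(10,0,1,3)$, the component $(u_1,v_1)$ rotates anti-clockwise; at $\Omega = 9$ the edge $(u_1,v_2)$ becomes tight and legitimate under your rule, the two essential edges merge into one valid component, and the merged component continues rotating anti-clockwise. At $\Omega = 8$ we reach $(8,2,0,4)$, but the component still has max only in $\lleft$, so no event fires: the mechanism keeps rotating and drives $u_2$ negative, leaving the core. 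A correct leximax mechanism therefore needs an additional stopping event of the form ``a vertex on the decreasing side reaches profit $0$'' (equivalently, treat the constraint $u_i \geq 0$ as a tight edge to a permanently frozen virtual vertex and move the appropriate sub-component to $\Fro$), together with the corresponding case in the analog of Lemma \ref{lem.H_t} showing that freezing at the boundary is leximax-optimal. Your write-up, like the paper's, omits this; the rest of your transposition (Invariant, legitimacy, the $\Xsuba$/$\Xsubb$/$\Xsubc$ constructions, the progress and epoch arguments) is the correct reading of the intended analogy.
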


\section{Algorithm for Min-Spread Core Imputation}
\label{sec.min-spread}

\begin{definition}
	\label{def.alpha-beta}
Given the assignment game, $G = (U, V, E), \ w: E \rightarrow \cQ_+$, define $\alpha$ and $\beta$ to be the max-min and min-max profit of a vertex in a core imputation, respectively. 
\end{definition}

Our algorithm consists of the following two phases:

{\bf Phase 1:}  
Run Algorithm \ref{alg.min} on the assignment game, $G = (U, V, E), \ w: E \rightarrow \cQ_+$, until the end of the first epoch (see Definition \ref{def.epoch}). At this stage, the clock $\Omega = \alpha$. Let $(u, v)$ be the current core imputation. 

\bigskip 

{\bf Phase 2:}  
Run Algorithm \ref{alg.max} on $G = (U, V, E), \ w: E \rightarrow \cQ_+$, with the starting imputation being $(u, v)$ until:

{\bf Case 1:}
A component $C \in \Act$ attempts to decrease the profit of a vertex below $\alpha$. Output the current imputation and HALT.

{\bf Case 2:}
The end of the first epoch is reached; at this stage, $\Omega = \beta$. Output the current imputation and HALT.

\begin{theorem}
\label{thm.min-spread}
The output of this algorithm is a min-spread core imputation. 
\end{theorem}

\begin{proof}
By Lemma \ref{lem.invariant}, under the imputation $(u, v)$, each vertex has profit at least $\alpha$\footnote{Alternatively, in Phase 1, Algorithm \ref{alg.min} could be run all the way to the end to compute the leximin core imputation, which will also have this property.}.

{\bf Case 1:}
Component $C \in \Act$ must have a vertex, say $w$, having a profit of $\Omega$. Additionally, $C$ has a special vertex, say $w'$, whose profit is about to decrease below $\alpha$. These two vertices must be on the same side of $C$ and there must be a path of essential and tight subpar edges connecting them. Observe that decreasing the difference in profit of $w$ and $w'$, say $D$, will make one of these subpar edges go under-tight, leading to an infeasible dual. Therefore, $D$ is a lower bound on the spread of a core imputation. 

When the algorithm halts, each vertex has profit at most $\Omega$ and therefore the imputation output has spread $D$. Hence it is a min-spread core imputation.

{\bf Case 2:}
Clearly the spread of any core imputation must be at least $\beta - \alpha$. When the algorithm halts, each vertex has profit at most $\Omega = \beta$ and the profit of no vertex dropped below $\alpha$. Therefore the imputation output has spread $\beta - \alpha$ and hence it is a min-spread core imputation.  
\end{proof}

\begin{remark}
	\label{min-spread}
Another way of computing a min-spread core imputation is to run Algorithm \ref{alg.max} first and Algorithm \ref{alg.min} next—the details are analogous to the algorithm given above. In general it will lead to a different imputation. 
\end{remark}

\section{Discussion}
\label{sec.discussion}

%%%%%%%%. This section is updated since submitting to JACM.
%%%%%%%%%%%.  Update only this section in JACM version.

%An obvious question is whether one can find the leximin and leximax core imputations for the assignment game via an LP-based approach? The answer is yes, using ideas from \cite{Always-active-LP} --- for identifying the set of {\bf always-active constraints}. As expected, this method gives no insights\footnote{Indeed, some people have called linear programming a ``large blunt hammer''.} into the structural properties of these solution concepts. Another question is whether there is a simpler combinatorial procedure, than Algorithm \ref{alg.min}, for finding the leximin core imputation for the assignment game? 

%Before answering this question, we consider the following problem: Find the set of essential vertices which are assigned the smallest profit in the leximin core imputation. Again an LP-based algorithm will require ideas from \cite{Always-active-LP}, for identifying the set of {\bf always-active constraints}. How about an efficient combinatorial algorithm? The only method we are aware of is to run Algorithm \ref{alg.min} for the first {\bf epoch} (see Theorem \ref{thm.time} for a definition) and pick the set of essential vertices having the smallest profits in $V_t$. This leads us to believe that the answer to the question stated above is ``No''. 

Interestingly enough, the two principles underlying primal-dual algorithms also underlie algorithms for finding Nash equilibria for a two-player zero-sum game, namely: 

\begin{enumerate}
	\item  Finding globally optimal solutions by making only local improvements and 
	\item  Two processes, a minimizer and a maximizer, alternately making improvements to dual entities by responding to each other's progress.  
\end{enumerate}

For the Nash equilibrium problem, the two regret minimization algorithms take local improvement steps in the direction of the gradient, and the average of the strategies played converges to the set of Nash equilibria. This is a folklore fact that traces back to at least the works by Hannan \cite{Hannan1957approximation} and Foster and Vohra \cite{Foster1997calibrated}; see also the recent paper \cite{Ioannis}.

Finally, here are some open problems. 

\begin{enumerate}
	\item  Where in the polytope of optimal dual solutions to LP (\ref{eq.core-dual-bipartite}) do the leximin and leximax core imputations lie? Characterize these points; see also Section \ref{sec.framework}.
\item The running time given in Theorem \ref{thm.time} is dominated by the time required to classify vertices and edges, see Section \ref{sec.Classify}. Is there a way of speeding up this part thereby speeding up the entire  algorithm? 
\end{enumerate}

\section{Acknowledgements}
\label{sec.ack}

I wish to thank Felix Brandt, Naveen Garg, Kamal Jain and Herv{\'e} Moulin
 for valuable discussions, with special thanks to Rohith Gangam.

	\bibliographystyle{alpha}
	\bibliography{refs}

\end{document}